\definecolor{blue0}{RGB}{0,77,153} 
\definecolor{red0}{RGB}{179,0,77} 
\definecolor{green0}{RGB}{134,219,76} 
\definecolor{gray0}{RGB}{84,97,110}
\newcommand{\cmark}{\ding{51}}%
\newcommand{\xmark}{\ding{55}}%
\numberwithin{equation}{section}
\newtheorem{theorem}{Theorem}[section]
\newtheorem{lemma}[theorem]{Lemma}
\newtheorem{remark}[theorem]{Remark}
\numberwithin{equation}{section}
\def\vs#1{\vspace{#1mm}}
\def\be{\begin{align}}
\def\ee{\end{align}}
\def\b*{\begin{eqnarray*}}
\def\e*{\end{eqnarray*}}
\def\be{\begin{eqnarray}}
\def\ee{\end{eqnarray}}
\def\beq{\begin{equation}}
\def\eeq{\end{equation}}
\def\b*{\begin{eqnarray*}}
\def\e*{\end{eqnarray*}}
\def\bi{\begin{itemize}}
\def\ei{\end{itemize}}
\def \1{{\bf 1}}
\def\={\;=\;}
\def\argmin{\mbox{\rm arg}\min}
 \def\vs#1{\vspace{#1mm}}
\def \E{\mathbb{E}}
\def \F{\mathbb{F}}
\def \Q{\mathbb{Q}}
\def \R{\mathbb{R}}
\def\Fc{{\cal F}}
\def\Gc{{\cal G}}
\newcommand{\Mid}{{\ \Big|\ }}
\definecolor{blue0}{RGB}{0,77,153} 
\definecolor{red0}{RGB}{179,0,77} 
\definecolor{green0}{RGB}{134,219,76} 
\definecolor{gray0}{RGB}{84,97,110}
\title{{Lifting the Heston model}}	
\date{\today}
\author{Eduardo {Abi Jaber}\thanks{abijaber@ceremade.dauphine.fr.  {I would like to thank Bruno Bouchard, Camille Illand, Mathieu Rosenbaum and Sergio Pulido for very fruitful discussions and insightful comments. I would also like to thank two anonymous referees for their  careful reading and their suggestions.}}}
\affil{AXA Investment Managers,  Multi Asset Client Solutions, Quantitative Research, \break
                        6 place de la Pyramide, 92908 Paris - La Défense, France.}
\affil{Universit\'e Paris-Dauphine, PSL University, CNRS, CEREMADE, 75016 Paris, France.}
\begin{document}

	\maketitle

	\begin{abstract}

How to reconcile the classical Heston model with its rough counterpart? We introduce a lifted version of the Heston model with $n$ {multi-factors, sharing} the same Brownian motion but mean reverting at different speeds. Our model nests as extreme cases the classical Heston model (when $n = 1$), and the rough Heston model (when $n$ goes to infinity). We show that the lifted model enjoys the best of both worlds: Markovianity, satisfactory fits of implied volatility smiles for short maturities with very few parameters, and consistency with the statistical roughness of the realized volatility time series. Further, our approach speeds up the calibration time and opens the door to time-efficient simulation schemes.\\
		
		\noindent  {{\textit{Keywords:}} Stochastic volatility, implied volatility, affine Volterra processes,  Riccati equations,  rough volatility}.  \vs1 
		
		
	\end{abstract}



\section{Introduction}
Conventional one-dimensional continuous stochastic volatility models, including the renowned Heston model \cite{heston1993closed}:
\begin{align}
dS_t &= S_t \sqrt{V_t}dB_t, \quad S_0>0, \label{eq:Heston S}\\
dV_t &= \lambda (\theta - V_t) dt + \nu \sqrt{V_t}dW_t, \quad V_0 \geq 0, \label{eq:Heston V}
\end{align}
 have struggled  in capturing the risk of large price movements on a short timescale. In the pricing world, this translates into failure to reproduce the at-the-money {skew of the implied volatility observed in the market as illustrated on Figure \ref{fig:skew intro} below}.

\begin{center}
	\includegraphics[scale=0.6]{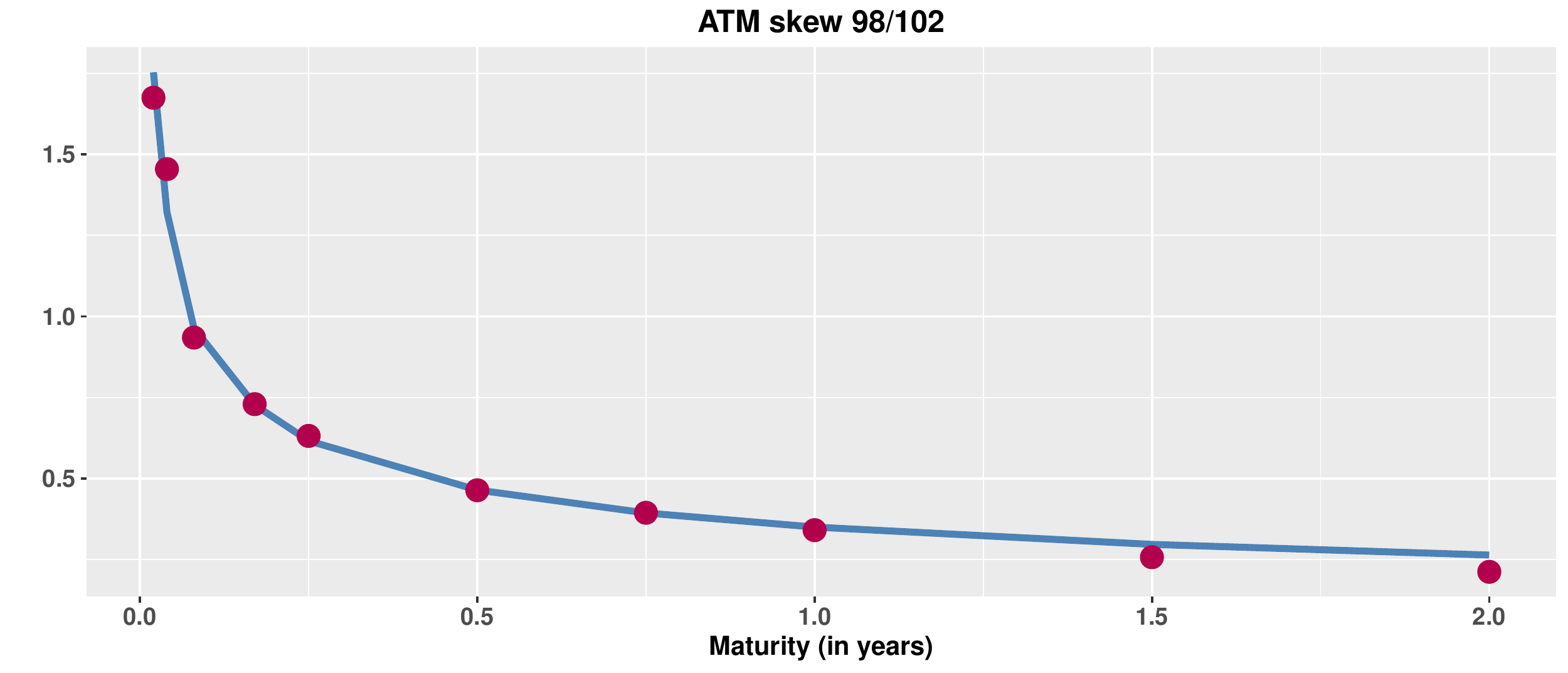}
	\captionof{figure}{{Term structure of the at-the-money skew of the implied volatility $\frac{\partial \sigma_{\text{implicit}}(k,T)}{\partial k}\big |_{k=0} $ for the S\&P index on June 20, 2018 (red dots) and a power-law fit $t \to 0.35\times t^{-0.41}$. Here $k:=\ln(K/S_0)$ stands for the log-moneyness and $T$ for the time to maturity. {(The appellation \textit{skew} is justified by the following relation $ \frac{\partial \sigma_{\text{implicit}}(k,T)}{\partial k}\big |_{k=0} \approx \frac{s}{6 \sqrt{T}}, $ where $s$ is the skew of $\log S_T$, see \cite[(5.93) on p.~194]{B15}.)}  }}
	\label{fig:skew intro}
\end{center}

In view of improving the overall fit,  several  directions have been considered over the past decades. Two of the  most common extensions are   adding   jumps \cite{conttankov,gatheral2011volatility}  and stacking additional random factors \cite{bergomi2005smile,fouque2011multiscale}, in order to  jointly account for short and  long timescales. While the two approaches have structural differences, they  both  suffer, {in general},  from the curse of dimensionality, as more   parameters are introduced, slowing down the calibration process {(one notable exception is the Variance-Gamma model  \cite{madan1998variance}).} Recently, rough volatility models  have been introduced as {a  fresh substitute with remarkable fits of the implied volatility surface, see \cite{Bayeretal2016,euch2017roughening,gatheral2014volatility}.} {The  rough variance process involves a  one-dimensional Brownian motion,  keeps the number of parameters small and enjoys continuous paths.}  {However, the price to pay is that rough volatility models leave the realm of semimartingale and Markovian models, which makes pricing and hedging  a challenging task, while degrading the calibration time.} Here, the curse of dimensionality hits us straight in the face in the non-Markovianity of the process. {Indeed, the rough model can be seen as an infinite dimensional Markovian model, as shown in \cite{AJEE18b,cuchiero2018generalized}.} \vs2

Going back to the standard Heston model \eqref{eq:Heston S}-\eqref{eq:Heston V}, despite its lack of fit for short maturities, {it  remains increasingly popular among practitioners. This is due to its high tractability, by virtue of the closed form solution of the characteristic function, allowing for fast pricing and calibration by Fourier inversion techniques \cite{carr1999option,fang2008novel}.} Recently, El Euch and Rosenbaum \cite{euch2016characteristic}  combined the tractability of the Heston model with the flexibility {of rough volatility models,   to elegantly concoct a rough counterpart of \eqref{eq:Heston S}-\eqref{eq:Heston V}, dubbed  the rough Heston model.} More precisely, the rough model is constructed by replacing the variance process \eqref{eq:Heston V} by a fractional square-root process as follows
				\begin{align}
				dS_t &= S_t \sqrt{V_t}dB_t, \quad S_0>0,  \label{eq:rough price}\\
			 V_t &= V_0 +  \frac{1}{\Gamma(H+1/2)}\int_0^t (t-s)^{H - 1/2} \left( \lambda( \theta - V_s)ds  + \nu\sqrt{V_s}dW_s\right), \label{eq:rough variance}
				\end{align} 
{where $H \in (0,1/2]$   has a physical interpretation, as it measures the regularity of the sample paths of $V$, see \cite{BLP:16, gatheral2014volatility}, the case $H=1/2$ corresponding to the standard Heston model. More precisely, the sample paths of $V$ are locally H\"older continuous of any order strictly less than $H$.}   As for the standard Heston model, the characteristic function of the log-price is known, but only up to the solution of a certain fractional Riccati Volterra equation. {Indeed, both models belong to the tractable and unifying class of affine Volterra processes introduced in \cite{ALP17}.} {The following table summarizes the characteristics of the two models.} \vs2\vs2\vs2\vs2\vs2
		
\begin{table}[h!]
	\centering  
	\begin{tabular}{c c  c } 
		
		\hline
		Characteristics & Heston & Rough Heston  \\
		\hline \hline                 
		Markovian & \textcolor{blue0}{\cmark} & \textcolor{red0}{\xmark}   \\
		Semimartingale & \textcolor{blue0}{\cmark} & \textcolor{red0}{\xmark}  \\
		Simulation & \textcolor{blue0}{Fast} & \textcolor{red0}{Slow}  \\
		&&\\
		Affine Volterra process &\textcolor{blue0}{\cmark} &\textcolor{blue0}{\cmark} \\
		Characteristic function & \textcolor{blue0}{Closed} & \textcolor{red0}{Fractional Riccati} \\
		Calibration  & \textcolor{blue0}{Fast} & \textcolor{red0}{Slower} \\
		&&\\
		Fit short maturities & \textcolor{red0}{\xmark} & \textcolor{blue0}{\cmark} \\
		Regularity of sample paths & \textcolor{red0}{$H=0.5$} & \textcolor{blue0}{$0<H\leq 0.5$} \\
		\hline 
	\end{tabular}
	\caption{Summary of the characteristics of the models.}
	\label{tablecomparisonintro} 
\end{table}

{
In the present paper, we study a conventional multi-factor  continuous stochastic volatility  model: \textit{the lifted Heston model}. {The variance process is constructed  as a weighted sum of $n$ factors, driven by the same one-dimensional Brownian motion, but mean reverting at different speeds, in order to accommodate a full spectrum of timescales. At first glance, the model seems over-parametrized, with already $2n$  parameters for the mean reversions and the weights. Inspired by the approximation results of \cite{AJEE18a}}, we provide a good parametrization of these $2n$ parameters in terms of one single parameter $H$, which is nothing else {but} the Hurst index of a limiting rough {Heston}  model \eqref{eq:rough price}-\eqref{eq:rough variance}, obtained after  sending the numbers of factors to infinity.}\vs2

{The lifted  model not only nests as extreme cases the classical Heston model (when $n = 1$) and the rough Heston model (when $n$ goes to infinity), but also   enjoys the best of both worlds: the flexibility of  rough volatility models, and the Markovianity of  their conventional counterparts.} Further, the model remains tractable{, as it also belongs to the class of affine Volterra processes.}  Here, the characteristic  function of the log-price is known up to a solution of a finite system of Riccati ordinary differential equations.  From a practical viewpoint, we demonstrate that the \textit{lifted Heston model}: 
\begin{itemize}
	\item
	reproduces the same volatility surface as the rough Heston model for maturities ranging from one week to two years, 
	\item
	mimics the explosion of the  at-the-money skew for short maturities,
	\item
	calibrates twenty times faster than its rough counterpart,
	\item
	is easier to simulate than the rough model,
	\item
	tricks the human eye as well as  statistical estimators of the Hurst index.
\end{itemize}
All in all,  the \textit{lifted Heston model} can be  more easily implemented than its rough counterpart, while still retaining the precision of implied volatility fits  of the rough Heston model. Further, the \textit{lifted Heston model} is able to generate a volatility surface, which cannot be  generated by the classical Heston model, with only one additional parameter. The lifted \textit{lifted Heston model} is also consistent with the statistical roughness of realized volatility times series across different timescales. Finally, the stock price and the variance process  enjoy continuous paths and only depend on a two-dimensional Brownian motion, leading to simple and feasible hedging strategies.\vs2

{The \textit{lifted Heston model} appeared for the first time in \cite{AJEE18a} as a multi-factor approximation of the rough Heston model, with hundreds of factors. In the present paper, we take the \textit{lifted Heston model} as our starting model and we argue that few factors are sufficient in practice. In addition, we provide a thorough  numerical study for calibration, robustness, simulation and estimation. This constitutes a crucial step towards the implementation of rough volatility models in practice that can be easily extended to other models than the Heston model. We  mention  \cite{bennedsen2017hybrid, gatheral2018rational, giorgia2018rough, HJM17} for several numerical algorithms for rough volatility models.}\vs2

The paper is outlined as follows. In Section \ref{S:lifted heston} we introduce our \textit{lifted Heston model} and provide  its existence, uniqueness and its affine Fourier-Laplace transform. Exploiting the limiting rough  model, we proceed in Section \ref{S:parameter reduction} to a reduction of the  number of parameters to calibrate.  {Numerical experiments for the model, with $n=20$ factors, are illustrated  in Section \ref{S:numerics}, for {calibration}, simulation and statistical estimation of the roughness. Finally, some technical  material is postponed to  Appendices \ref{eq:appendix existence}-\ref{A:scheme}.}

\section{The lifted Heston model}\label{S:lifted heston}

We fix $n \in \mathbb N$ and we define the \textit{lifted Heston model} as a conventional  stochastic volatility model,  with $n$ factors for the variance process: 
\begin{align}
dS_t^n &= S_t^n \sqrt{V^n_t} dB_t, \quad S_0^n>0,  \label{eq:liftedS}\\
V^n_t &= g_0^n(t) + \sum_{i=1}^n c^n_i U^{n,i}_t, \label{eq:liftedV}\\
dU_{t}^{n,i} &= \left( -x^n_i U^{n,i}_t -\lambda V^n_t \right) dt + \nu \sqrt{V^n_t} dW_t, \quad U^{n,i}_0 =0, \quad i=1,\ldots,n, \label{eq:liftedU}
\end{align}
 with parameters the function $g_0^n$, $\lambda,\nu \in\R_+ $,  $c^n_i,x^n_i \geq  0$, for $i=1,\ldots,n$, and $B=\rho W + \sqrt{1-\rho^2}W^{\perp}$, with $(W,W^{\perp})$ a two dimensional Brownian motion on a fixed filtered probability space $(\Omega, \Fc, \F:=(\Fc_t)_{t \geq 0}, \Q)$, with $\rho \in [-1,1]$. \vs2

We stress that all the factors $(U^{n,i})_{1 \leq i \leq n}$  start from zero\footnote{Notice that the initial value of the variance process  $V^n$  is $g_0^n(0)$.} and  share the same dynamics, with the same one-dimensional Brownian motion $W$, except that they mean revert at different speeds $(x_i^n)_{1 \leq i \leq n}$. Further, the deterministic input curve $g_0^n$ allows one to plug-in initial term-structure curves. More precisely, taking the expectation in \eqref{eq:liftedV} leads to the following relation 
$$ \E[V^n_t] + \lambda \sum_{i=1}^n c^n_i \int_0^t e^{-x^n_i (t-s)} \E[V^n_s] ds = g_0^n(t), \quad  t \geq 0. $$
{In practice, the forward variance curve, up to a horizon $T>0$,  can be extracted from variance swaps observed in  the market and then plugged-in in place of $(\E[V^n_t])_{t \leq T}$ in the previous expression.}\vs2
 For a suitable choice of continuous curves $g_0^n$, for instance 
if 
\begin{align}\label{eq:g0 example 1}
g^n_0 \mbox{ is non-decreasing such that } g^n_0(0) \geq 0,
\end{align} 
or 
\begin{align}\label{eq:g0 example 2}
 g_0^n:t\to V_0 + \sum_{i=1}^n  c_i^n  \int_0^t e^{-x_i^n(t-s)} \theta (s) ds, \mbox{ with } V_0 , \theta \geq 0, 
 \end{align} 
there exists a unique continuous $\F$-adapted strong solution $(S^n,V^n,(U^{n,i})_{1 \leq i \leq n})$ to \eqref{eq:liftedS}-\eqref{eq:liftedU}, such that $V^n_t \geq 0$, for all $t \geq 0$, and $S^n$ is a $\F$-martingale. {We refer to Appendix \ref{eq:appendix existence} for more details and the exact definition of the set of admissible input curves $g_0^n$.} \vs2

Since {our main objective is  to compare the lifted model  to other existent models, we will restrict to the case of input curves of the form} 
\begin{align}\label{eq: flat curve}
g_0^n:t\to V_0 + \lambda \theta \sum_{i=1}^n  c_i^n  \int_0^t e^{-x_i^n(t-s)}   ds , \quad \mbox{with } V_0,\theta \geq 0 .
\end{align}
Setting $n=1$, $c^1_1= 1$ and $x^1_1=0$, the \textit{lifted Heston model} degenerates into the standard Heston  model \eqref{eq:Heston S}-\eqref{eq:Heston V}.  So far, the multi-factor extensions of the standard Heston model have been considered  by stacking additional  square-root processes as in the double Heston model\footnote{The double Heston model is defined  in   \cite{christoffersen2009shape} as follows 
\begin{align}
dS_t &= S_t\left( \sqrt{U^1_t} dB^1_t + \sqrt{U_t^2} dB^{2}_t\right), \label{eq:doubleS}\\
dU^i_t &= \lambda_i(\theta_i- U^i_t )dt + \nu_i   \sqrt{U^i_t} dW^i_t, \quad  U^i_0 \geq 0, \quad i \in \{1,2\} \label{eq:doubleU}, 
\end{align}
where $B^i = \rho_i W^i + \sqrt{1-\rho^2_i} W^{i,\perp} $ with $\rho_i \in [-1,1]$ and  $(W^1,W^2,W^{1,\perp},W^{2,\perp})$ a four-dimensional Brownian motion.}  of \cite{christoffersen2009shape} and the multi-scale Heston model of \cite{fouque2011fast}, or by considering a Wishart matrix-valued process as in \cite{da2008multifactor}.  {In both cases,  the dimension of the driving Brownian motion  for the variance process, along with the number of parameters, grows with the number of factors.} Clearly,  the  \textit{lifted Heston model}   differs from these extensions, {one can compare \eqref{eq:liftedS}-\eqref{eq:liftedU} for $n=2$ with  \eqref{eq:doubleS}-\eqref{eq:doubleU}.}  \vs2

{Just like the classical Heston model, the \textit{lifted Heston model} remains tractable. Specifically, fix $u \in \mathbb C$ such that $\mbox{Re}(u) \in [0,1]$.} {By virtue of Appendix \ref{S: appendix full fourier},} the Fourier-Laplace transform of the log-price is exponentially affine with respect to the factors $(U^{n,i})_{ 1 \leq i \leq n}$:
		\begin{align}\label{eq:char function log S}
		\E\left[ \exp\left(u \log S^n_t \right) \Mid \Fc_t \right] = \exp\left({\phi^n(t,T)  +  u \log S^n_t  +  \sum_{i=1}^n  c^n_i\psi^{n,i} (T-t) U^{n,i}_t }\right),
		\end{align}
		for all $t \leq T$, where   $(\psi^{n,i})_{1 \leq i \leq n}$ solves the following  $n$-dimensional system of Riccati ordinary differential equations
\begin{align}\label{eq:psi xi}
(\psi^{n,i})' = - x^n_i  \psi^{n,i} + F\left(u,   \sum_{j=1}^n c^n_j \psi^{n,j} \right),\quad  \psi^{n,i}(0)=0, \quad i=1,\ldots, n,
\end{align}
with 
\begin{align}\label{eq:RiccatiF}
F(u,v) = \frac{1}{2}(u^2 - u) + (\rho \nu u - \lambda) v + \frac{\nu^2}{2} v^2, 
\end{align} 
and 
 $$\phi^n(t,T) =   \int_0^{T-t} F\left(u,\sum_{i=1}^n  c^n_i\psi^{n,i}(s) \right)g^n_0 (T-s) ds, \quad t \leq T.$$

In particular, for $t=0$, since $U^{n,i}_0=0$ for $i=1,\ldots, n$, the unconditional Fourier-Laplace transform reads
		\begin{align}\label{eq:char function log S t=0}
		\E\left[ \exp\left(u \log S^n_t \right)\right] = \exp\left( {u \log S^n_0} +  \int_0^{T} F\left(u,\sum_{i=1}^n  c^n_i\psi^{n,i}(s) \right)g^n_0 (T-s) ds \right).
		\end{align}

A similar formula holds for  the Fourier-Laplace transform of the joint process $(\log S^n, V^n)$ with  integrated log-price and variance, we refer to the Appendix \ref{S: appendix full fourier} for the precise expression.\vs2

Consequently, the Fourier-Laplace transform of the \textit{lifted Heston model} is known in closed-form, up to the solution of a deterministic $n$-dimensional system of ordinary differential equations \eqref{eq:psi xi}, which can be solved numerically. Once there, standard Fourier inversion techniques can be applied  on \eqref{eq:char function log S t=0} to deduce option prices.     This is illustrated in the following sections.

\section{Parameter reduction and the choice of the number of factors}\label{S:parameter reduction}

{In this section, we proceed to a reduction of the number of parameters to calibrate. Our  inspiration stems from  rough volatility. In a first step, for every $n$, we  provide a parametrization of the weights and the mean reversions $(c^n_i,x^n_i)_{1 \leq i \leq n}$ in terms of the Hurst index $H$ of a limiting rough volatility model and one additional parameter $r_n$.  Then, we specify the number of factors $n$ and the value of the additional parameter $r_n$ so that the lifted model  		reproduces the same volatility surface as the rough Heston model for maturities ranging from one week up to two years,	while calibrating twenty times faster than its rough counterpart.  Benchmarking  against rough volatility models is justified by the fact that  one of the main strengths of these models is their ability  to achieve better fits of the implied volatility surface than conventional one-dimensional stochastic volatility models. This has been illustrated on real market data in \cite{Bayeretal2016,euch2017roughening}. Finally, for the sake of completeness, we provide a comparison with the standard Heston model.
}
\subsection{Parametrization in terms of the Hurst index}
{For {an initial input curve of the form \eqref{eq: flat curve}, the \textit{lifted Heston model} \eqref{eq:liftedS}-\eqref{eq:liftedU} has the same five parameters $(V_0, \theta, \lambda, \nu, \rho)$ of the Heston model,   plus  $2n$ additional  parameters for the weights and the  mean reversions $(c^n_i,x^n_i)_{1 \leq i \leq n}$.\footnote{{{If one chooses $g^n_0$ to match the 
 forward variance curve, then, the parameters $(V_0,\theta)$ can be eliminated from both models.}}}} At first sight, the model seems to suffer from the curse of dimensionality,  as it requires the calibration of $(2n+5)$ parameters.  This is where the exciting theory of  rough volatility finally  comes   into play. Inspired by the approximation result \cite[Theorem 3.5]{AJEE18a}, we suggest to use  a parametrization of $(c^n_i,x^n_i)_{1 \leq i \leq n}$ in terms of two well-chosen parameter. By doing so, we reduce the $2n$ additional parameters to calibrate to only two effective parameters.} \vs2

{Qualitatively, we choose the weights and mean reversions $(c^n_i,x^n_i)_{1 \leq i \leq n}$ in such a way that sending the number of factors $n  \to \infty$ would yield the convergence of the \textit{lifted Heston model}  towards a rough Heston model \eqref{eq:rough price}-\eqref{eq:rough variance},  with parameters $(V_0, \theta, \lambda, \nu, \rho,H)$. The additional parameter $H \in (0,1/2)$ is the so-called Hurst index of the limiting fractional variance process \eqref{eq:rough variance}, and it measures the regularity of its sample paths.} {This is possible by virtue of an infinite-dimensional  Markovian representation of the limiting rough variance process \eqref{eq:rough variance} due to \cite{AJEE18b}, which we recall in the following remark.}\vs2

\begin{remark}[Representation of the limiting rough process]\label{R:representation}
	The fractional kernel 
	appearing in  the limiting  {rough process} \eqref{eq:rough variance}  admits the following {Laplace representation}
	\begin{align*}
	\frac{t^{H- 1/2}}{\Gamma(H+1/2)}	=\int_0^{\infty} e^{-xt} {\mu(dx)}, \quad \mbox{with  }\; {\mu(dx) = \frac{x^{- H - 1/2}}{\Gamma(1/2-H) \Gamma(H+1/2)}},
	\end{align*}	
	so that the stochastic Fubini theorem, after setting $V_0\equiv0$ in \eqref{eq:rough variance}, leads to
	\begin{align*}
	V_t &= \int_0^{\infty} {U_t(x)}  {\mu(dx)}, \quad x >0,
	\end{align*}	 
	where, for all $x>0$,
	\begin{align*}
	{U_t(x):=\int_0^t e^{-x(t-s)} \left(\lambda ( \theta - V_s) ds +  \nu  \sqrt{V_s} dW_s \right)}.
	\end{align*}	 
	This can be seen as the {mild formulation} of the following stochastic partial differential equation
	\begin{align}
	d{U_t(x)}&= \left({-xU_t(x)} + \lambda \left( \theta -\int_0^{\infty} U_t(y) \mu(dy)  \right)\right) dt +  \nu  \sqrt{\int_0^{\infty} U_t(y) \mu(dy)} dW_t, \label{eq:spde U1}\\
	U_0(x)&=0, \quad x>0. \label{eq:spde U2}
	\end{align}	 
	Whence, the {rough process} can be reinterpreted as a superposition of infinitely many factors ${(U_{\cdot}(x))_{x>0}}$ sharing the same dynamics but mean reverting at {{different speeds $x \in (0,\infty)$}.} We refer to \cite{AJEE18b} for the rigorous treatment of this representation. One makes the following observations:
	\begin{itemize}
		\item 
		{multiple timescales} are naturally encoded in rough volatility models, which can be a plausible explanation for  their ability to achieve better fits than conventional one-dimensional models,
		\item 
		the largest mean reversions going to infinity characterize the factors responsible of the {roughness} of the process.
	\end{itemize}
\end{remark}

More precisely,  for a fixed even number of factors $n$, \eqref{eq:liftedU} corresponds to a discretization of \eqref{eq:spde U1} in the $x$-variable, after approximating $\mu$ by a sum of diracs $\sum_{i=1}^n c_i^n \delta_{x_i^n}$. We fix $r_n>1$ and we consider the following parametrization for  the weights and the mean reversions
 {\begin{align}\label{eq: ci and xi}
 c^n_i=\frac{(r_n^{1-\alpha}-1)r_n^{(\alpha-1)(1+n/2)}}{\Gamma(\alpha)\Gamma(2-\alpha)} r_n^{(1-\alpha)i} \;\;\mbox{ and }\;\;  x^n_i= \frac{1-\alpha}{2-\alpha}\frac{r_n^{2-\alpha}-1}{r_n^{1-\alpha}-1} r_n^{i-1-n/2},  \;\;  i=1,\ldots,n,
 \end{align}} 
 where $\alpha:=H+1/2$ for some $H\in (0,1/2)$.\footnote{{{This corresponds to equation (3.6) in \cite{AJEE18a} with the geometric partition  $\eta^n_i=r_n^{i-n/2}$ for $i=0,\dots,n$, which is in the spirit of \cite{CCM00} for the approximation of the factional Brownian motion.}}}\vs2

 {If in addition, the sequence $(r_n)_{n \geq 1}$ satisfies 
 	\begin{align}\label{eq:rn cond}
 	r_n \downarrow 1 \quad \mbox{and} \quad  n \ln r_n \to \infty, \quad \mbox{as } n\to \infty,
 	\end{align}
 then, Theorem \ref{T:convergence rHeston}  in the Appendix ensures the convergence of the lifted model towards the rough Heston model, as $n$ goes to infinity. We refer to Appendix \ref{A:limiting model} for more details.}\vs2

{In order to visualize this convergence,   we first   generate our benchmark implied volatility surface, 
\begin{align}
&\mbox{for $9$ maturities $T \in \{\mbox{1w, 1m, 2m, 3m, 6m, 9m, 1y, 1.5y, 2y}\}$},  \label{eq:T}\\
&\quad\quad\quad\quad\quad \mbox{with up to $80$ strikes $K$ per maturity,} \label{eq:K}
\end{align} 
with a rough Heston model with parameters $\Theta_0:=(V_0,\theta,\lambda,\nu,\rho,H)$ given by
\begin{align}\label{eq: 5parameters}                                                                                                                                                                                                                                                                                                                                                                                                                                                                                                                                                                                                                                                                                                                                                           
V_0=0.02, \quad \theta= 0.02, \quad \lambda = 0.3, \quad   \nu=0.3, \quad \rho=-0.7 \quad \mbox{and} \quad H=0.1.                                                                                                                                                                                                                                                                                                                                                                                                                                                                                                                                                                                                                                                                                                          
\end{align}      

{We recall that the implied volatility surface can be computed by Fourier inversion techniques. Indeed,  it follows from \cite{ALP17,euch2016characteristic} that the Fourier-Laplace transform of the log-price in the rough Heston model \eqref{eq:rough price}-\eqref{eq:rough variance} is of the form
$$ \E[\exp\left(u \log S_T\right)] = \exp\left( u \log S_0 +   \int_0^T F(u,\psi(s,u)) g_0(T-s)ds \right),$$
where $F$ is given by \eqref{eq:RiccatiF}, 
$$  g_0(t)=V_0 + \lambda \theta \int_0^t \frac{s^{H-1/2}}{\Gamma(H+1/2)}ds,$$
and  $\psi$ solves the following fractional Riccati
equation
\begin{align}\label{eq:RiccatiFrac}
 \psi(t,u) =  \frac{1}{\Gamma(H+1/2)}\int_0^t (t-s)^{H - 1/2} F(u,\psi(s,u))ds.
\end{align}		
One then solves \eqref{eq:RiccatiFrac}  numerically  and computes the implied volatilities by Fourier inversion techniques. Here the Adams Predictor-Corrector scheme \cite{diethelm2002predictor} is used with 200 time steps for the discretization of the fractional Riccati equation \eqref{eq:RiccatiFrac}, we refer to \cite[Appendix A]{euch2017roughening} for a complete exposition of this discretization scheme. Then,  call prices are computed via the cosine method \cite{fang2008novel} for the inversion of the characteristic function.\footnote{We note that other Fourier inversion techniques can be used for the second step, for instance, the Carr-Madan method  \cite{carr1999option}, as done in  \cite{euch2017roughening}. As illustrated in \cite{fang2008novel}, for the same level of accuracy, the cosine method is approximately 20 times  faster than the Carr-Madan method, and needs drastically less evaluation points  of the characteristic function $(E\left[\exp\left(u_i \log S^n_t\right)\right])_{i \in \mathcal I}$ ($|\mathcal I|=160$ for the cosine methods and $|\mathcal I|=4096$ for the Carr-Madan method). This latter point is crucial in our case since, for every $i \in \mathcal I$, evaluation of $E\left[\exp\left(u_i \log S^n_t\right)\right]$ requires a numerical discretization of the corresponding Riccati equation.} 
} The generated implied volatility  is kept fixed and is denoted by $\sigma_{\infty}(K,T;\Theta_0)$, for every pair $(K,T)$ in \eqref{eq:T}-\eqref{eq:K}. \vs2      

Then, we define the following sequence 
\begin{align}\label{eq:r_n choice 1}
r_n =1+ 10\,n^{-0.9}, \quad n \geq 1,
\end{align}
which clearly satisfies \eqref{eq:rn cond}.}  For each $n \in \{10,20,50,100,500\}$, we generate the implied volatility surface of  the \textit{lifted Heston model}\footnote{The implied volatility surface is generated by first   solving numerically the $n$-dimensional Riccati equations \eqref{eq:psi xi} with the explicit-implicit scheme \eqref{eq:explicit implicit riccati} detailed in the Appendix with a number of time steps $N=300$.  As before, the  call prices are then  computed via the cosine method \cite{fang2008novel} for the inversion of the characteristic function.} with $n$-factors, with the same set of parameters $\Theta_0$ as in \eqref{eq: 5parameters}, and \eqref{eq:r_n choice 1} plugged in \eqref{eq: ci and xi}. For each $n$, the generated surface is denoted by $\sigma_n(K,T;r_n,\Theta_0)$, for every pair $(K,T)$ in \eqref{eq:T}-\eqref{eq:K}.  \vs2

Because the sequence $(r_n)_{n \geq 1}$ defined in \eqref{eq:r_n choice 1}  satisfies condition \eqref{eq:rn cond}, as $n$ grows, 
$$ \sigma_{n}(K,T;r_n,\Theta_0)  \to \sigma_{\infty}(K, T;\Theta_0),$$
by virtue of Theorem \ref{T:convergence rHeston} in the Appendix.
This convergence phenomenon is illustrated on Figure \ref{fig:convergence factors} below for two maturity slices, one week and one year.

\begin{center}
	\includegraphics[scale=0.65]{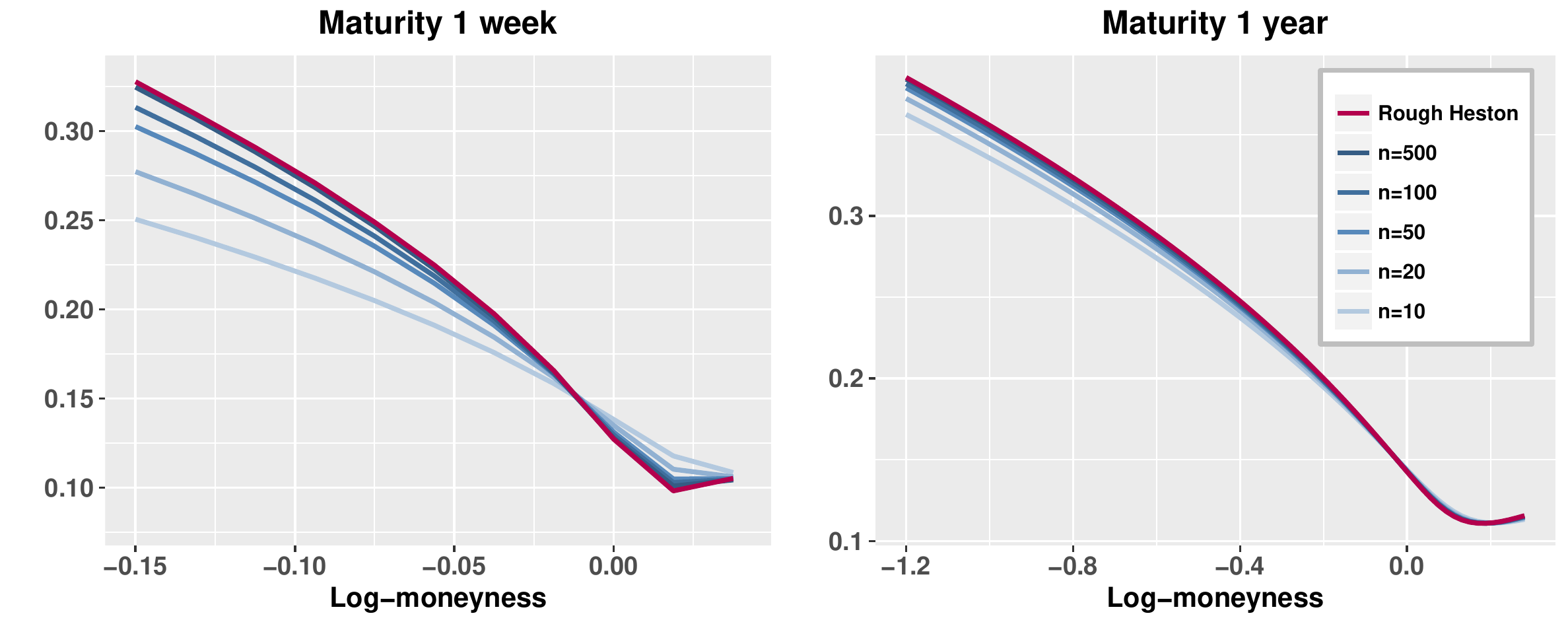}
	\rule{35em}{0.5pt}
	\captionof{figure}{{Convergence of the implied volatility surface of the lifted model $\sigma_n(k,T;r_n,\Theta_0)$, with $r_n=1+10\, n^{-0.9}$, towards its rough counterpart $\sigma_{\infty}(k,T;\Theta_0)$, illustrated on two maturities slices $T \in \{1 \mbox{ week}, 1 \mbox{ year}\}$. Here $k:=\ln(K/S_0)$ stands for the log-moneyness.}}
	\label{fig:convergence factors}
\end{center}

In view of assessing the proximity between the implied volatility surface $\sigma_{n}(K,T;r_n,\Theta_0)$ of the \textit{lifted Heston model} and that of the {rough Heston model} $\sigma_{\infty}(K,T;\Theta_0)$, we compute the mean squared error (MSE) between the two volatility surfaces defined as follows
\begin{align*}
\frac{1}{\sum_{(K',T')} w(K',T')} \sum_{(K,T)} w(K,T) (\sigma_n(K,T;r_n,\Theta_0) - \sigma_{\infty}(K,T;\Theta_0))^2,
\end{align*}
where we sum over all pairs $(K,T)$ as in \eqref{eq:T}-\eqref{eq:K}. Here, $w$ stands for a matrix of  weights, where we put more weight on options near the money and with  short time to maturity (one could also set $w(K,T)=1$ for all $(K,T)$).\vs2

The corresponding mean squared errors of Figure \ref{fig:convergence factors} are reported in Table \ref{tableconvergence} below, along with the computational time\footnote{{All cpu times are computed on a  laptop with  Intel core i7 processor  at 2.2GHz  and 16GB of memory. The code, written in R, is far from being optimized.}} for generating the whole volatility surface, for all pairs $(K,T)$ as in \eqref{eq:T}-\eqref{eq:K}, that is, for $9$ maturities slices with up to $80$ strikes per maturity.\footnote{{One cannot draw definite quantitative conclusions regarding the  comparison between the computational times of the lifted surface and  the one of  the rough surface. Indeed, one needs a more careful study of the discretization errors of the corresponding Riccati equations before comparing the computational times needed to reach the same level of accuracy. We omit to do so here.  However, even if one reduces the number of time steps from 200 to 150 in the Adams scheme, it still takes 67.2 seconds to compute the rough surface. Recall that we used $N=300$ time steps for the $n$-dimensional Riccati equation of the lifted model. In any case, it should be clear that solving the $20$-dimensional Riccati equations is considerably faster then solving the fractional Riccati equation.}}
\begin{table}[H]
	\centering  
	\begin{tabular}{c c  c cc} 
		\hline
		& $n$ & $r_n=1+10\,n^{-0.9}$ & Time (seconds) & MSE \\ 
		\hline \hline                 
		\textcolor{blue0}{\textbf{Lifted Heston}} & $10$ & 2.26 & 3.9& 1.20e-03 \\ 
		& {20} & 1.67 & {4.4} & 1.85e-04 \\ 
		& $50$ & 1.3 & 5.2 & 6.81e-05 \\ 
		& $100$ & 1.16 & 6.6  & 2.54e-05\\ 
		& $500$ & 1.04 & 17.4  & 3.66e-06\\ 
		&  &  &  \\ 
		\textcolor{red0}{\textbf{Rough Heston}} & $n \to \infty$ & $r_n \downarrow 1$ & 106.8  & \\ 
		\hline 
	\end{tabular}
	\caption{Convergence of the lifted model towards its rough counterpart for $r_n=1+10\, n^{-0.9}$, with the corresponding computational time in seconds for generating the implied volatility  surface  \eqref{eq:T}-\eqref{eq:K}.}
	\label{tableconvergence} 
\end{table}

All in all, we notice that the number of effective parameters remains constant and does not depend on the number of factors $n$. {This has to be contrasted with the usual multi-factor extensions:  the double Heston model \eqref{eq:doubleS}-\eqref{eq:doubleU}  already has 10 parameters $(U^i_0,\theta_i, \lambda_i , \nu_i, \rho_i )_{i \in \{1,2\}}$,  the multi-scale model of \cite{fouque2011fast} also suffers  from over-parametrization.}\vs2

{In the subsequent subsection, we will explain how  to fix $n$ and $r_n$, so that the parameters to calibrate are reduced to only six effective parameters $(V_0,\theta,\lambda, \nu, \rho, H)$, one additional parameter than the standard Heston model!}

{\subsection{Practical choice of $n$ and $r_n$}}
{We suggest to fix the following values}
\begin{align}\label{eq:choice n r}
n=20 \quad \mbox{ and } \quad r_{20}=2.5.
\end{align}
{Our choice will be based on  the numerical comparison with the rough Heston model of the previous section.} \vs2

We start by explaining our choice for the number of factors $n$ in \eqref{eq:choice n r}. Based on Table \ref{tableconvergence}, we choose $n$ with a good trade-off between time-efficiency and proximity to the rough volatility surface. Fixing $n=20$ seems to be a good choice. Visually, as already shown on Figure \ref{fig:convergence factors},  the two implied volatility slices have almost identical shapes. Whence, one would expect that by letting the parameters $r_{20}$  free,  
    one could  achieve a perfect fit of the rough  surface with only $n=20$ factors. This can be formulated as follows:  keeping the six parameters of the lifted  model fixed as in \eqref{eq: 5parameters},  can one find $  {r^*_{20}(\Theta_0)}>1$ such that 
$$ \sigma_{20}(K,T; r^*_{20}(\Theta_0), \Theta_0 )  \approx \sigma_{\infty}(K, T;\Theta_0), \quad \mbox{for all } K,T?$$

The next subsection provides a positive answer.

\subsubsection{Mimicking roughness by increasing $r_{20}$}

{First, one needs  to understand the influence of the parameter $r_n$ on the \textit{lifted Heston model}.}  Increasing $r_n$ has the effect of boosting the parameters $(c^n_i,x_i^n)_{1 \leq i \leq n}$ in \eqref{eq: ci and xi}, leading to an increase of the  vol-of-vol parameter of the lifted  model  given by  
$\nu \sum_{i=1}^n c_i^n, $
together with faster mean-reversions $(x_i^n)_{1 \leq i \leq n}$ for the factors. In analogy with conventional stochastic volatility models, such as the standard Heston model \eqref{eq:Heston S}-\eqref{eq:Heston V}, increasing the vol-of-vol parameter  together with the speed of mean reversion yields a steeper  skew at the  short-maturity end of the volatility surface. Consequently, increasing the parameter $r_n$ in the lifted model should steepen the implied volatility slice for short-maturities. Figure \ref{fig:convergence 20factors r}  below confirms that this is indeed the case when one increases the value of $r_{20}$ from $1.67$ to $2.8$, for the $20$-dimensional lifted model, as the two slices now almost perfectly match:
\begin{center}
	\includegraphics[scale=0.65]{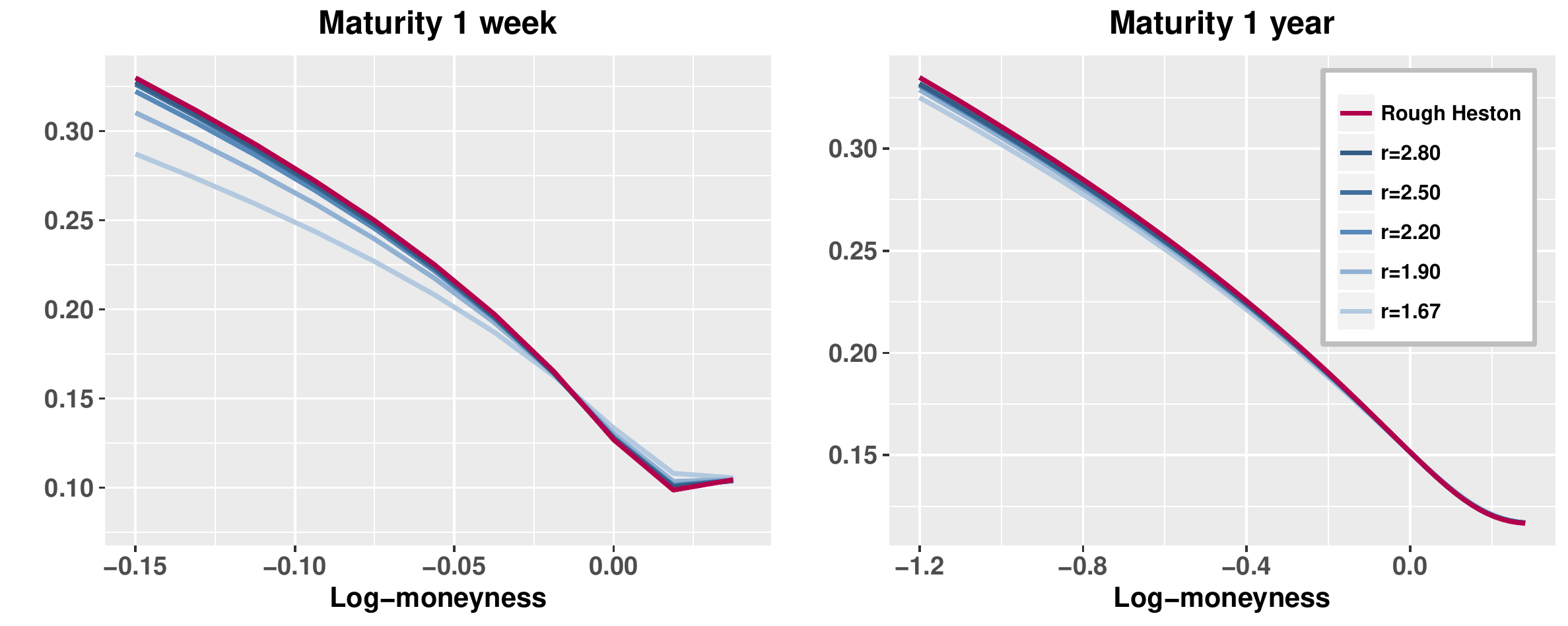}
	\rule{35em}{0.5pt}
	\captionof{figure}{{Implied volatility of the $20$-dimensional lifted model $\sigma_{20}(k,T;r_{20},\Theta_0)$, for different values of $r_{20}$ ranging from $1.67$ to $2.8$, and the rough surface $\sigma_{\infty}(k,T;\Theta_0)$, for two maturities slices $T \in \{1 \mbox{ week}, 1 \mbox{ year}\}$.}}
	\label{fig:convergence 20factors r}
\end{center}

The corresponding mean squared errors of Figure \ref{fig:convergence 20factors r} are collected in Table \ref{table n=20} below.
\begin{table}[h!]
	\centering  
	\begin{tabular}{c c  } 
		\hline
		\textcolor{blue0}{\textbf{Lifted Heston}}  &\textcolor{blue0}{($n=20$)} \\
		\hline
		$r_{20}$ & MSE \\ 
		\hline \hline                 
	  1.67  & 1.85e-04 \\ 
	  1.90  & 4.16e-05 \\ 
    2.20  & 8.72e-06 \\ 
    2.50  & 3.64e-06 \\ 
	2.80  & 2.81e-06 \\ 
		\hline 
	\end{tabular}
	\caption{Mean squared errors between the   $20$-dimensional lifted model $\sigma_{20}(k,T;r_{20},\Theta_0)$ and the rough model $\sigma_{\infty}(k,T;\Theta_0)$, for different values of $r_{20}$. }
	\label{table n=20} 
\end{table}

Because $r_n$ has to converge to $1$, when  $n$ goes to infinity, recall \eqref{eq:rn cond}, we seek to keep $r_n$ as small as possible.  For $n=20$, fixing $r^*_{20}(\Theta_0)=2.5$ yields already satisfactory results, improving the mean squared error  of 1.85e-04 in Table \ref{tableconvergence} to  3.64e-06.\vs2

Before moving to a physical justification of the choice of $r_{20}$,  we proceed to the  full calibration of the \textit{lifted Heston model} with $n=20$ and $r_{20}=2.5$  to the rough volatility surface $\sigma_{\infty}(K, T;\Theta_0)$. That is, we let the    six effective parameters $(V_0,\theta, \lambda,  \nu,  \rho,  H)$ of the lifted model  free. The calibrated values $\hat \Theta_0:=(\hat V_0,\hat \theta,\hat \lambda, \hat \nu, \hat \rho, \hat H)$, provided in Table  \ref{table:calibrate lifted},  agree with \eqref{eq: 5parameters}.  At the visual level,  as shown on Figure \ref{fig:lifted vs rough} in the Appendix, the calibrated lifted surface    is indistinguishable from  the rough surface $\sigma_{\infty}(K,T;\Theta_0)$ for all maturities ranging from one week to two years, with a mean squared error of order 4.01e-07. 
	\begin{table}[H]
		\centering  
		\begin{tabular}{  c c} 
			\hline\hline                        
			Parameters & Calibrated values \\
			\hline \hline                 
			$\hat V_0$&  0.02012504    \\
			$\hat \theta$ & 0.02007956    \\
			$\hat \lambda$ & 0.29300681   \\
			$\hat \nu$ & 0.30527694 \\
			$\hat \rho$ & -0.70241116   \\
			$\hat H$ & 0.09973346\\
			\hline 
		\end{tabular}
		\caption{Calibrated \textit{lifted Heston model} parameters.}
		\label{table:calibrate lifted} 
	\end{table}

We now provide another physical justification for the choice of $r_{20}$ based on the infinite-dimensional  Markovian representation of Remark \ref{R:representation}. We notice that for the {lifted model}, the mean reversions in \eqref{eq: ci and xi} satisfy
$${x^n_i}\geq r^{i-1-n/2}_n, \quad {i=1,\ldots,n }.$$
Therefore, based on  Remark \ref{R:representation}, for $n=20$, one would like to force $x^{20}_{20}$ to be large enough in order to mimic {roughness}  and account for very short timescales, while having $x^{20}_1$ small enough to accommodate a whole palette of timescales. Setting 
  $$ r_{20} \approx  2.5,$$
would cover  mean reversions between $10^{-4}$ and $10^4$.\vs2

\begin{remark}[An alternative way of fixing $r_n$]
{{Fix $n,H$ and maturities $(T_i)_{1 \leq i \leq N}$.   Lemma A.3 in the Appendix suggests to determine the `optimal' value of $r_n^*(H,T_1,\ldots, T_N)$  as  
			$$r_n^*(H,T_1,\ldots, T_N)  = \argmin_{r_n>1} \sum_{i=1}^N w_i \| K_n - K \|_{L^2(0,T_i)}$$
			for some fixed weights $(w_i)_{1 \leq i \leq N}$. For $n=20$, $H=0.1$, $N=1$ and $T=0.1$, $r_{20}^*=2.55$. For $T=1$, $r_{20}^*=3.15$.}}
\end{remark} 

The previous justification suggests that once $n=20$ is fixed, one can choose $r_{20}$ independently  of the parameters  $\Theta$.
The next experiment shows that this is indeed the case.

\subsubsection{Robustness of $r_{20}$: a numerical test}
Throughout this section, we fix  the three parameters $V_0,\theta=0.02$  and $\lambda =0$. In order to verify experimentally  the robustness of $r_{20}=2.5$, we proceed as follows.  
\begin{enumerate}
	\item 
	Simulate $M=500$ set of parameters $(\Theta_k:=(0.02,0.02,0,\nu_k,\rho_k,H_k))_{k=1,\ldots,M}$ uniformly distributed with the following bounds
$$   \quad 0.05\leq \nu\leq 0.5, \quad -0.9 \leq \rho \leq -0.5, \quad 0.05 \leq H \leq 0.2. $$
		\item 
	For each $k=1,\ldots,M$:
	\begin{enumerate}
		\item 
		Generate the rough  volatility surface $\sigma_{\infty}(K,T;\Theta_k)$, for all pairs $(T,K)$  in \eqref{eq:T}-\eqref{eq:K},
		\item 
		Generate the lifted volatility surface $\sigma_{20}(K,T;r_{20}=2.5,\Theta_k)$, for all pairs $(T,K)$  in \eqref{eq:T}-\eqref{eq:K},
		\item 
		Compute the mean squared error between the two volatility surfaces: 
		\begin{align*}
		\mbox{MSE}_k:=\frac{1}{\sum_{(K',T')} w(K',T')} \sum_{(K,T)} w(K,T) (\sigma_{20}(K,T;r_{20}=2.5,\Theta_k) - \sigma_{\infty}(K,T;\Theta_k))^2.
		\end{align*} 
	\end{enumerate} 
\end{enumerate}

The scatter plot and the empirical distribution of the mean squared error $(\mbox{MSE}_k)_{k=1,\ldots,M}$ are illustrated in Figure \ref{fig:mse simulated} below. 
\begin{center}
	\includegraphics[scale=0.65]{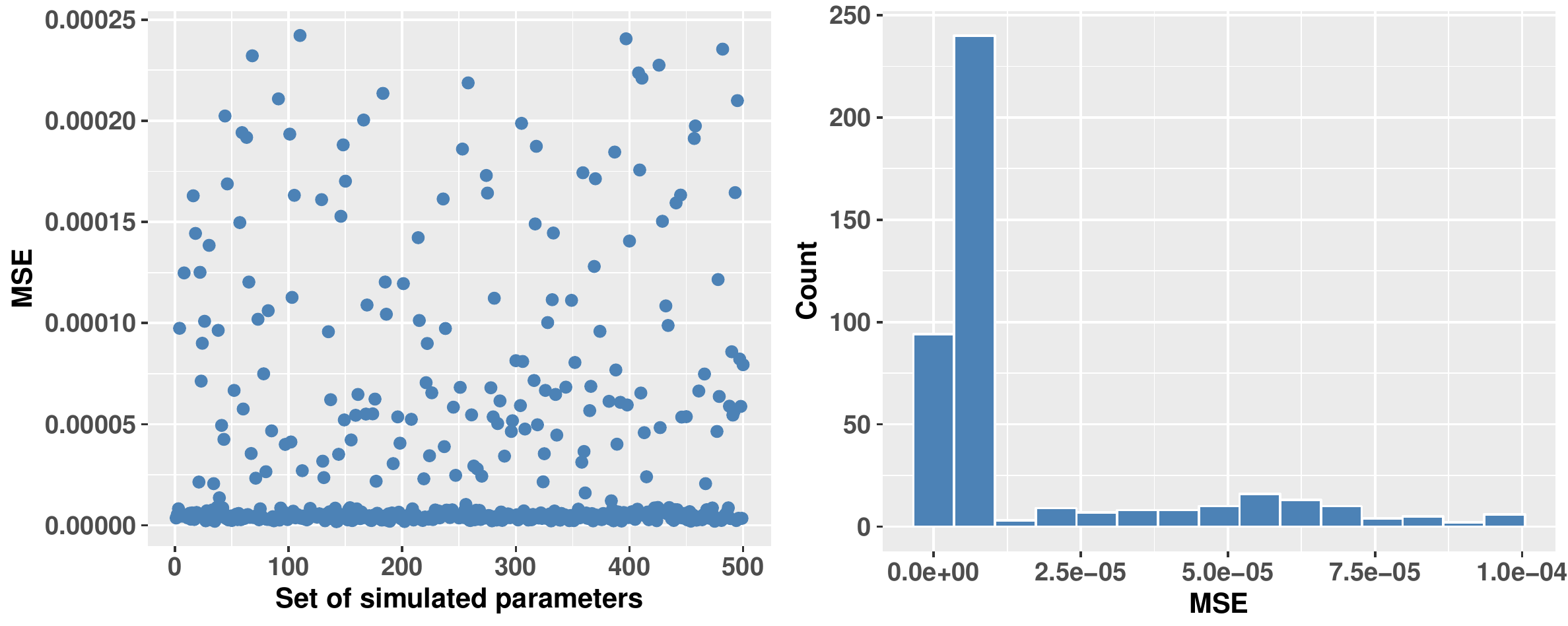}
	\rule{35em}{0.5pt}
	\captionof{figure}{Scatter plot (left) and empirical distribution (right) of the mean squared error $(\mbox{MSE}_k)_{k=1,\ldots,M}$ of the $M=500$  simulated set of parameters $(\Theta_k)_{k=1,\ldots,M}$.}
	\label{fig:mse simulated}
\end{center}

 The first twenty values of the simulated set of parameters with the corresponding mean squared error are provided in Table \ref{table random20} in the Appendix. We observe that the lifted surfaces are quite close  to the rough surface,  for any value of the simulated parameters.
This  is confirmed by  Table \ref{table random} below, where we collect the descriptive statistics of the computed mean squared errors $(\mbox{MSE}_k)_{k=1,\ldots,M}$.

\begin{table}[H]
	\centering  
	\begin{tabular}{cc} 
		\hline\hline                        
		&       MSE \\ 
		\hline
		Minimum &    1.81e-06   \\ 
		1st Quantile   & 3.83e-06   \\ 
		Median  & 5.48e-06   \\ 
		3rd Quantile  & 4.91e-05   \\ 
		Maximum   &  2.42e-04   \\ 
		\hline
	\end{tabular}
	\caption{Descriptive statistics of the mean squared error $(\mbox{MSE}_k)_{k=1,\ldots,M}$ of the $M=500$  simulated set of parameters $(\Theta_k)_{k=1,\ldots,M}$.}
	\label{table random} 
\end{table}


We now show that the mean squared errors can be improved by letting the three parameters $(\nu,\rho,H)$ of the lifted model free. Specifically, consider  the worst mean squared error of  Table \ref{table random}
\begin{align}\label{eq:maxmse}
\max_{\Theta_k} \mbox{MSE}_k = 2.42\mbox{e-}04,
\end{align}
 which is attained for the set of parameters $\Theta_{101}$ with
$$ \nu_{101}= 0.1537099, \quad \rho_{101}=-0.8112745  \; \mbox{ and } \;    H_{101}=0.1892725.$$  Keeping the first three parameters fixed $V_0,\theta=0.02$ and $\lambda=0$, we  proceed to the calibration of the lifted model  to the rough surface $\sigma_{\infty}(K,T;\Theta_{101})$. The calibration yields 
 $$ \hat \nu=0.1647801, \quad \hat \rho= -0.7961080 \; \mbox{ and } \;   \hat H=0.1957235,$$
 improving  the previous mean squared error  \eqref{eq:maxmse} to 1.62e-06. This shows that, by fine tuning the parameters of the lifted model, for any rough volatility surface $\sigma_{\infty}(K,T;  \Theta)$ with a realistic set of parameters $\Theta$,  one can find a set of parameters $\hat \Theta$, not too far from $\Theta$, such that 
$$ \sigma_{20}(K,T;  r_{20}=2.5, \hat \Theta) \approx \sigma_{\infty}(K,T;   \Theta), \quad \mbox{for any pair } (K,T) \mbox{ in } \eqref{eq:T}\mbox{-}\eqref{eq:K}.$$

 To sum up, we showed so far  that  the \textit{lifted Heston model}, with $n=20$ and $r_{20}=2.5$, is able to produce the same volatility surfaces of the rough Heston model, for any realistic set of parameters, for maturities ranging between one week and two years. Consequently, it can be used directly to fit real market data instead of the rough Heston model. 
$$\mbox{\textit{Why is it more convenient to use the lifted Heston model rather than its rough counterpart?}}$$
 On the one hand, it speeds-up calibration time. Indeed, solving numerically the $20$-dimensional system of Riccati ordinary differential equations \eqref{eq:psi xi} is up to twenty times faster than the Adams scheme for the fractional Riccati equation \eqref{eq:RiccatiFrac}.  On the other hand, the lifted model remains Markovian and semimartingale, which opens the door to time-efficient recursive simulation schemes for pricing and hedging more complex exotic options. {Before testing the lifted model in practice, we compare it to the standard Heston model.}\vs2 

\subsection{Comparison with the standard Heston model}
For the sake of comparison, we calibrate a standard Heston model \eqref{eq:Heston S}-\eqref{eq:Heston V} to the full rough volatility surface  $\sigma_{\infty}(K,T;\Theta_0)$, with $\Theta_0$ as in \eqref{eq: 5parameters}. Recall that the standard Heston model corresponds to the case $n=1$, $x^1_1=0$ and $c^1_1=1$. The calibrated  parameters of the standard Heston are provided in Table \ref{table:calibrate Heston} below.  We observe that the calibrated values of $(\hat V_0,\hat \theta, \hat \rho)$ have the same magnitude as the ones of  \eqref{eq: 5parameters}. This is not surprising since these parameters have the same interpretation in the two models: the first two parameters $(\hat V_0,\hat \theta)$ govern the level of the term structure of forward variance at time $0$ while $\rho$ dictates the leverage effect between the stock price and its variance.  
\begin{table}[H]
	\centering  
	\begin{tabular}{  c c} 
		\hline\hline                        
		Parameters & Calibrated values \\
		\hline \hline                 
		$\hat V_0$&  0.019841  \\
		$\hat \theta$ & 0.032471  \\
		$\hat \lambda$ & 3.480784  \\
		$\hat \nu$ & 0.908037\\
		$\hat \rho$ & -0.710067 \\
		\hline 
	\end{tabular}
	\caption{Calibrated Heston model parameters.}
	\label{table:calibrate Heston} 
\end{table}

Despite the extreme values of the  calibrated mean reversion and vol-of-vol parameters $(\hat \lambda, \hat \nu)$, {the Heston model is not able to reproduce the steepness of the skew} for short maturities  as shown on 
Figure \ref{fig:calibrated heston surface} in the Appendix, with a mean squared error of order 2.06e{-03}. For long maturities, the fit is fairly good. \vs2

In order to compare our findings  with the observed  stylized fact of Figure \ref{fig:skew intro}, we plot on Figure \ref{fig:skew} below the term structure of the at-the-money skew of the three models: the rough Heston with parameters as in \eqref{eq: 5parameters}, the calibrated \textit{lifted Heston} model of Table \ref{table:calibrate lifted} and the calibrated Heston model of Table \ref{table:calibrate Heston}.  The Heston model fails in reproducing the explosive behavior of the term structure of the at-the-money skew observed in the market. On the contrary, this feature is captured by the lifted and rough counterparts.   For long maturities, all three model have the same behavior.  
\begin{center}
	\includegraphics[scale=0.6]{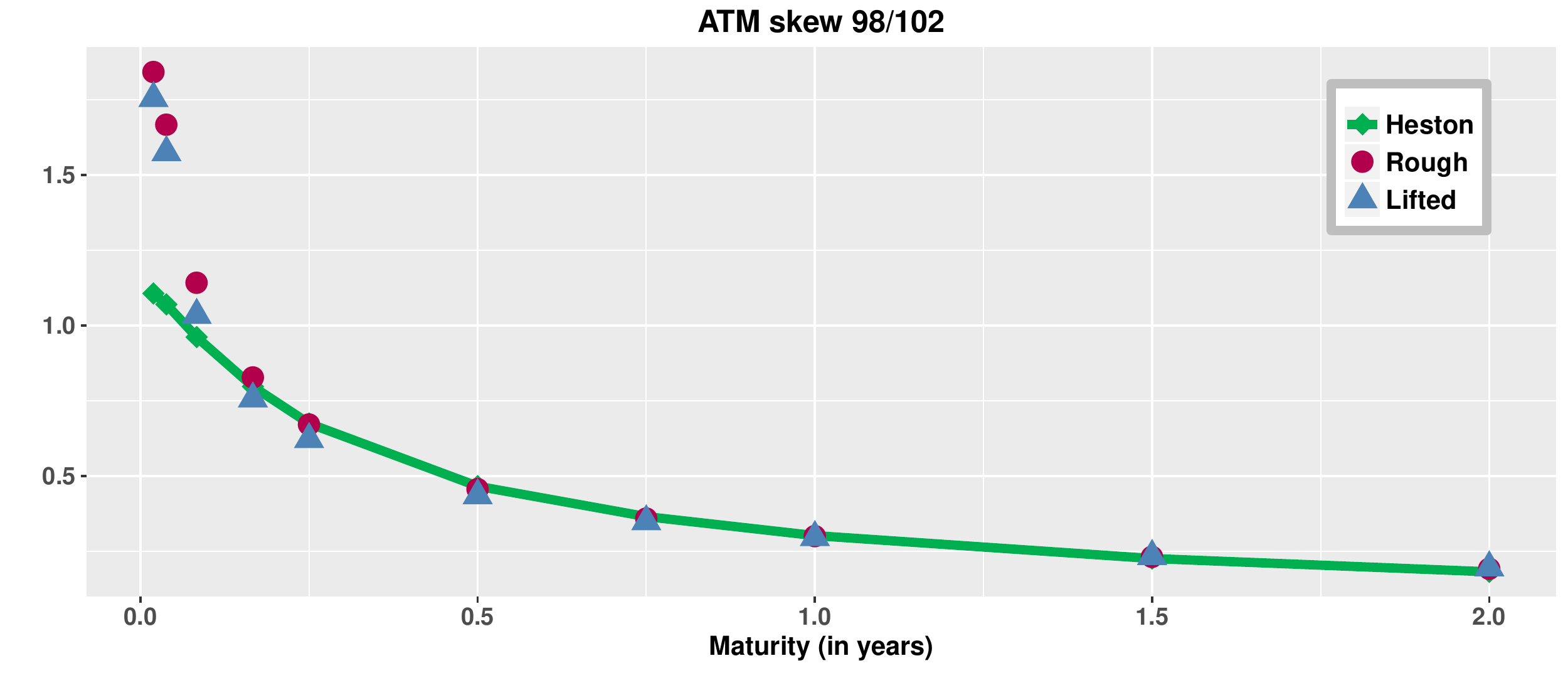}
	\rule{35em}{0.5pt}
	\captionof{figure}{{Term structure of the at-the-money skew of the rough Heston model $\sigma_{\infty}(K,T;\Theta_0)$ of \eqref{eq: 5parameters} (red circles), the calibrated \textit{lifted Heston model} $\sigma_{20}(K, T; r_{20}=2.5, \hat \Theta_0)$ of Table \ref{table:calibrate lifted} (blue triangles) and the calibrated Heston model of Table \ref{table:calibrate Heston} (green line). }}
	\label{fig:skew}
\end{center}

In the sequel, we will  show that, for $n=20$ factors, the \textit{lifted Heston model} provides  an appealing  trade-off between consistency with market data and tractability. We stress that $r_{20}=2.5$ is  kept fixed in the lifted model, which now has only six effective parameters to calibrate $(V_0,\theta,\lambda,\nu,\rho,H)$. Again, in practice, $V_0$ and $\Theta_0$ can be eliminated by specifying the initial forward variance curve as input and $\lambda$ can be set to $0$, as mean reversions at different speeds are naturally encoded in the lifted model through the family $(x^n_i)_{1 \leq i \leq n}$. By doing so,  one reduces the effective number of parameters to only three  $(\nu,\rho,H)$, as already done in \cite{euch2017roughening} for the rough Heston model.

\section{Calibration on market data and simulation}\label{S:numerics}
In this section, we fix the number of factors to $n=20$ and set $r_{20}=2.5$ in \eqref{eq: ci and xi}.  We  demonstrate	 that the \textit{lifted Heston model}: 
\begin{itemize}
\item
{captures the explosion of the  at-the-money skew observed in the market,}
\item
is easier to simulate than the rough model,
\item
tricks the human eye as well as the statistical estimator of the Hurst index. 
\end{itemize}

{\subsection{Calibration to the at-the-money skew}}
Going back to real market data, we calibrate the lifted model to the at-the-money skew of Figure \ref{fig:skew intro}. Keeping the parameters
$V_0=0.02$, $\theta=0.02$ and $\lambda=0$ fixed, the calibrated parameters are given by 
\begin{align}\label{eq:calibrated skew lifted}
 \hat \nu = 0.3161844, \quad \hat \rho=-0.6852625 \quad \mbox{and} \quad \hat H= 0.1104290.
\end{align} 
The fit is illustrated on Figure \ref{fig:skew calibrated} below.

\begin{center}
	\includegraphics[scale=0.6]{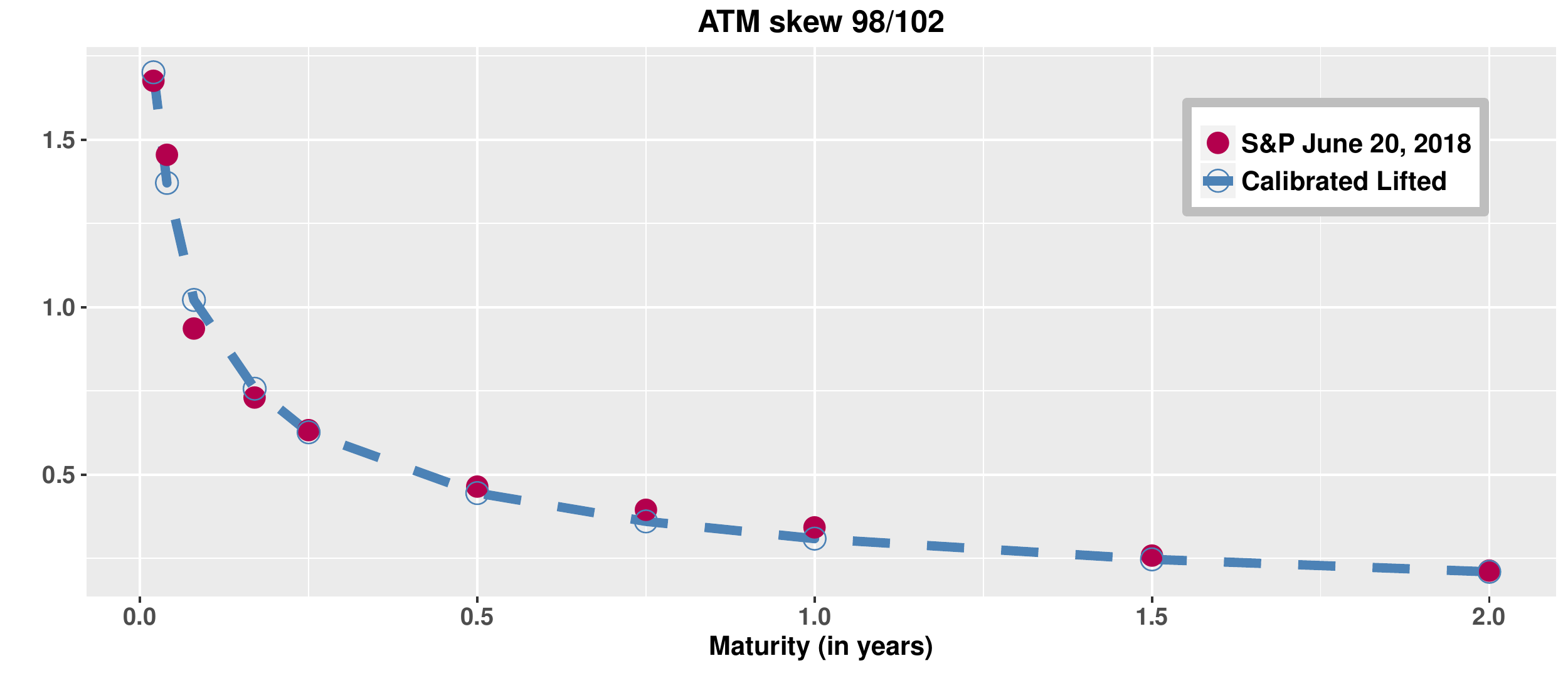}
	\rule{35em}{0.5pt}
	\captionof{figure}{{Term structure of the at-the-money skew for the S\&P index on June 20, 2018 (red dots) and for the lifted model with calibrated parameters \eqref{eq:calibrated skew lifted} (blue circles with dashed line).} }
	\label{fig:skew calibrated}
\end{center}

We notice the calibrated value $\hat H$ in \eqref{eq:calibrated skew lifted} is coherent with the value $(0.5-0.41)=0.09$, which can  be read off  the power-law fit of Figure \ref{fig:skew intro}.   Consequently, in the pricing world, the parameter $H$ quantifies the explosion of the at-the-money skew through a power-law $t\to C t^{0.5-H}$, see also \cite{fukasawa2017short}.\vs2

 {We discuss the simulation procedure  and the statistical estimation of $H$ of our lifted model in the next subsection.}

\subsection{Simulation and estimated roughness}

{Until now, there is  no existing scheme to simulate the variance process \eqref{eq:rough variance} of the rough Heston model, the crux resides in the non-Markovianity of the variance process,  the singularity of the  kernel and the square-root dynamics. In contrast, numerous approximation schemes have been developed for the simulation of the standard square-root process \eqref{eq:Heston V}, see \cite[Chapters 3 and 4]{Al15} and the references therein.} Because the \textit{lifted Heston model}	 \eqref{eq:liftedS}-\eqref{eq:liftedU} is a Markovian and semimartingale model, one can adapt standard recursive Euler-Maruyama schemes to simulate the variance process $V^n$ first, and then the stock price $S^n$.  For $T>0$, we consider the  modified explicit-implicit scheme \eqref{eq:simliftedV}-\eqref{eq:simliftedU} detailed in the Appendix for the variance process $V^n$. \vs2

We observe on Figure \ref{fig:simfactors}  below that the factors $(U^{20,i})_{1 \leq i \leq 20}$ are highly correlated. We can distinguish between the  short-term factors with fast mean reversions, responsible of the `roughness', and the long-term factors, with slower mean reversions, determining the level of the variance process. The variance process is then obtained by aggregating these factors with respect to \eqref{eq:liftedV}. {We also notice that some of the factors $(U^{n,i})_{1 \leq i \leq n}$  become negative,  but that the aggregated process $V^n$ remains nonnegative at all time.}

{\begin{remark}[Nonnegativity of the variance process]\label{R:nonnegativity} Looking at the stochastic differential equation \eqref{eq:liftedV}-\eqref{eq:liftedU}, it is not straightforward at all why $V^n$ should stay nonnegative at all time, even for the zero initial curve $g_0 \equiv 0$. Indeed, some of the factors $(U^{n,i})_{1 \leq i \leq n}$ may become negative, but surprisingly enough, their aggregated sum $V^n$ remains nonnegative, at all time.  This is due to a very special underlying structure: equations \eqref{eq:liftedV}-\eqref{eq:liftedU} can be recast as a stochastic Volterra equation of convolution type for a suitable kernel, we refer to Appendix \ref{eq:appendix existence} for more details.
\end{remark}}

\begin{center}
	\includegraphics[scale=0.6]{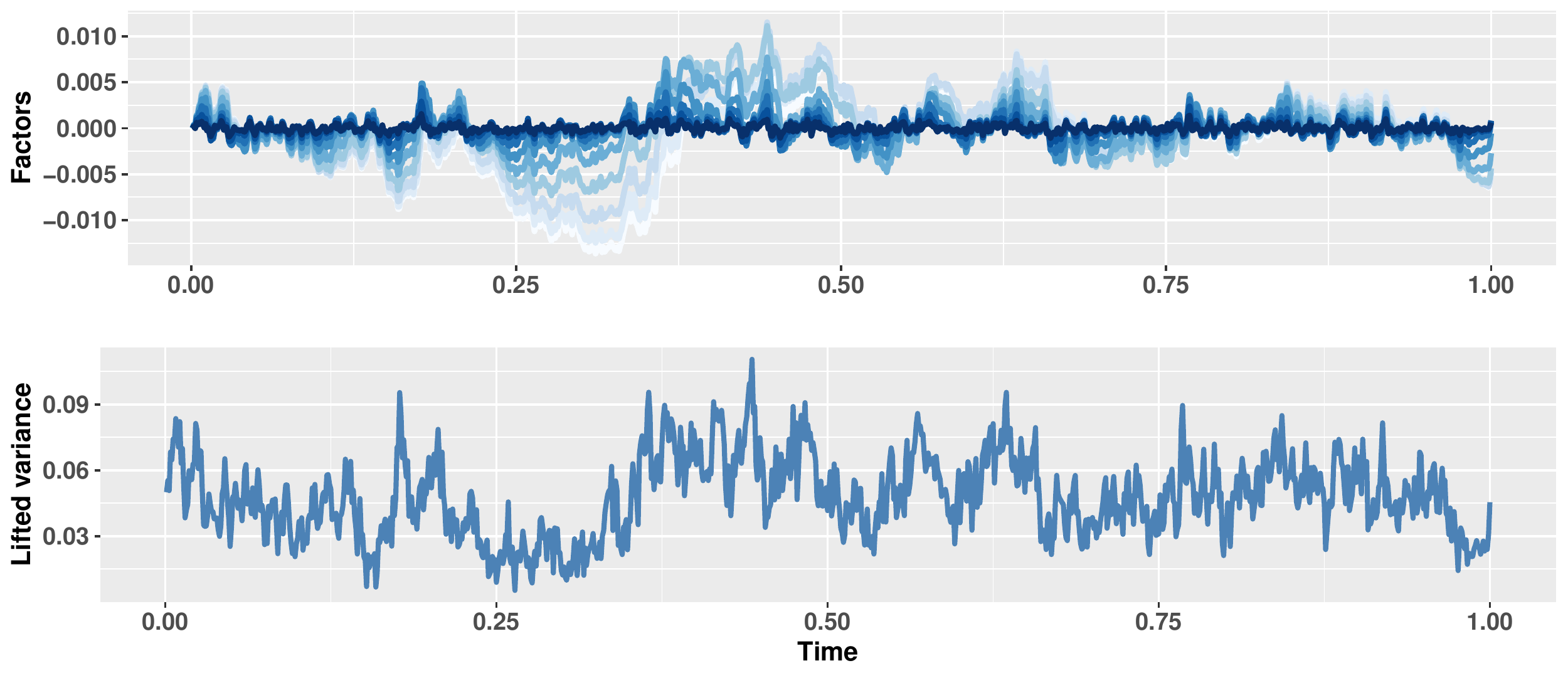}
	\rule{35em}{0.5pt}
	\captionof{figure}{{One sample path of the simulated factors $(U^{20,i})_{1 \leq i \leq 20}$ with blue intensity proportional to the speed of mean reversions $(x^n_i)_{1 \leq i \leq 20}$ (upper) and the corresponding aggregated variance process $V^n$ (lower) with  parameters $V_0=0.05$, $\theta=0.05$, $\lambda=0.3$, $\nu=0.1$  and $H=0.1$ for a time step of $0.001$ and $T=1$ year.}}
	\label{fig:simfactors}
\end{center}

Visually, the sample path of the variance process seems rougher than the one of a standard Brownian motion, closer to that of a fractional Brownian motion with a small Hurst index. This observation is strengthened by     Figure \ref{fig:sim comparison} below: the sample path of the volatility process in the   lifted  model (lower subgraph) looks clearly rougher than the  sample path of the volatility process in the standard Heston model (middle subgraph). It also seems to enjoy the same regularity as that of 
 the realized volatility of the  S\&P (upper subgraph). \vs2

In what follows, we provide a  quantitative analysis of the previous observation by running two standard statistical experiments that have been used in  \cite{BLP:16,gatheral2014volatility} to estimate the roughness of a realized volatility time series. {More precisely, 
	the empirical studies of \cite{BLP:16,gatheral2014volatility} on  a very wide range of assets volatility time series    revealed that the dynamics of  the log-realized volatility are close to that of a fractional Brownian motion with a `universal' Hurst parameter $H$ of order $0.1$, from intra-day up to daily timescales. {These studies  provide a  physical interpretation of   the parameter $H$, as it measures the roughness of the empirical realized volatility of the upper graph of Figure \ref{fig:sim comparison}.}} We run these two procedures on a simulated path of the lifted model.
First, we apply the estimation procedure of  \cite{gatheral2014volatility}  for daily timescales. Then, we  apply the methodology that was used in \cite{BLP:16}, focusing on intra-day timescales. We recall that, theoretically speaking, because the lifted variance process is a semimartingale, it has the same regularity as a standard Brownian motion, that is $H=0.5$.

\begin{center}
	\includegraphics[scale=0.6]{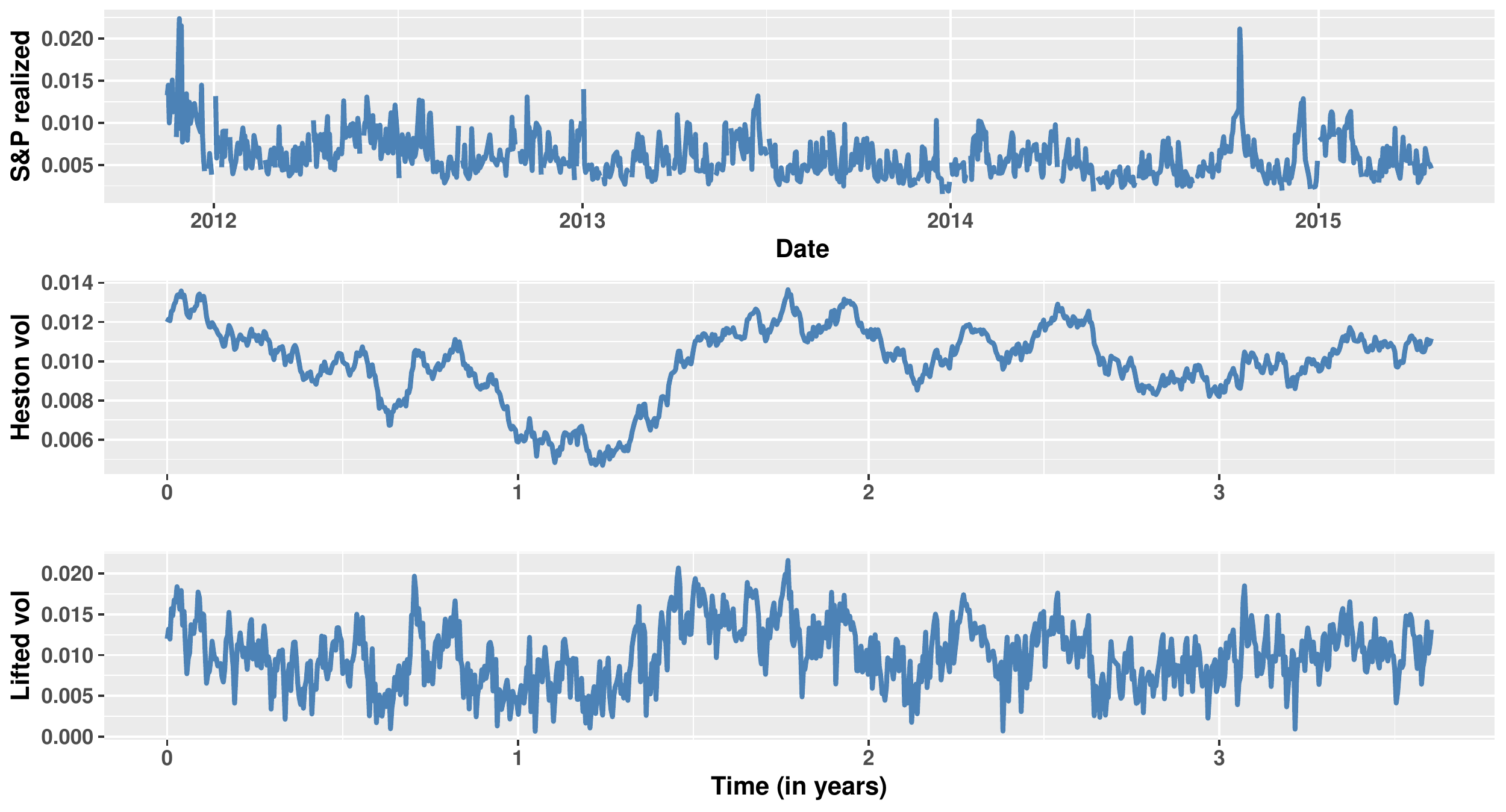}
	\rule{35em}{0.5pt}
	\captionof{figure}{{Estimated Hurst index of:  the realized volatility of the  S\&P\textsuperscript{(a)} (upper),  a sample path of the volatility process in the Heston model (middle), and  a sample path of the volatility process in the  the lifted  model with $H=0.1$ (lower). The simulation is run with $N=250$ time steps for each year.}}
	\footnotesize\textsuperscript{(a)}The realized volatility data series can be downloaded from \url{https://realized.oxford-man.ox.ac.uk/}.
	\label{fig:sim comparison}
\end{center}

\subsubsection{First statistical experiment for daily timescales as in \cite{gatheral2014volatility}}
We replicate the methodology used in \cite[Section 2]{gatheral2014volatility} for estimating the smoothness of the volatility process $\sigma=\sqrt{V^n}$.\footnote{More details can be found in the Python notebook of Jim Gatheral  \url{https://tpq.io/p/rough_volatility_with_python.html}.} This boils down to  estimating the following $q$-variation 
\begin{align}\label{eq:qvariation}
\hat m(q,\Delta) = \frac{1}{N}  \sum_{k=1}^N \left|\sigma_{k\Delta} - \sigma_{(k-1)\Delta}\right|^q.
\end{align} 
for different values of $q$ and timescales $\Delta$ greater than one day. We recall that the notion of $q$-variation is linked to the notion of Besov smoothness of stochastic processes, and to that of H\"older regularity as $\Delta \to 0$, see \cite{rosenbaum2009first}. In order to estimate \eqref{eq:qvariation}, we simulate a sample path of the lifted variance model $V^{n}$ (recall that $n=20$) with $H=0.1$ and $T=2$ years with $N=500$ time steps. This corresponds to one time step per day with  the convention of $250$ trading days per year. On the left hand side of Figure \ref{fig:estimateHexperiment1} below, we plot the value of  $ \log \hat m(q,\Delta)$ against $\log \Delta$, for  $\Delta=1,2,\ldots,100$ days and $q \in \{0.5,1,1.5,2,3\}$.

 \begin{center}
 	\includegraphics[scale=0.65]{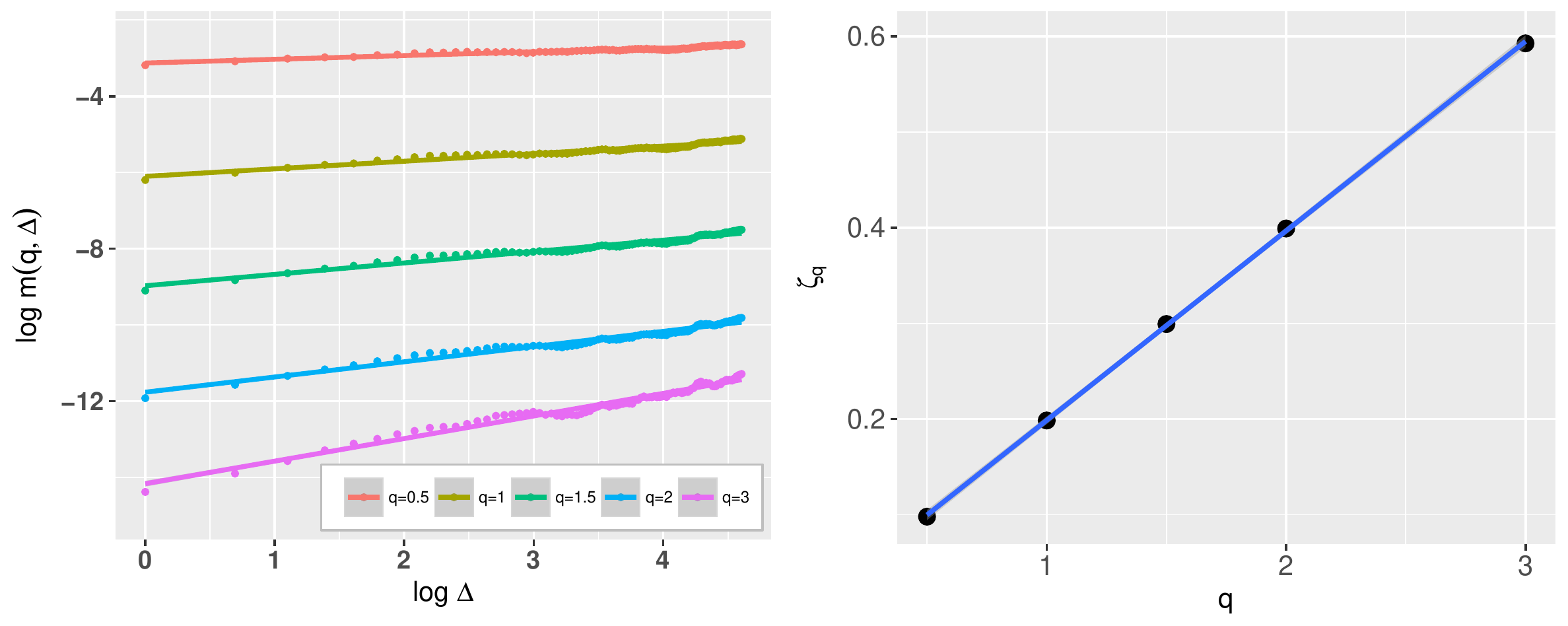}
 	\rule{35em}{0.5pt}
 	\captionof{figure}{Estimation procedure of \cite{gatheral2014volatility} applied to a sample path of the lifted volatility process: log-log plot for different estimated moments $\hat m(q,\Delta)$ (left); $\zeta_q $ against $q$ (right). }
 	\label{fig:estimateHexperiment1}
 \end{center}

  For each $q$, the points seem to lie on a straight line, which suggests the following scaling
 $$  \hat m(q,\Delta) = K_q \Delta^{\zeta_q}, $$ 
 where $\zeta_q>0$ corresponds to the slope of the fitted line in the log-log plot. Further, plotting  $\zeta_q$ against $q$ on the right hand side of Figure  \ref{fig:estimateHexperiment1}, shows that 
 $$ \zeta_q \approx \hat H q, \quad \hat H=0.19.$$
 To sum up, this shows that, statistically speaking, the estimated $q$-variation  of the lifted volatility process enjoys the  following scaling  
 $$ \hat m(q,\Delta) = K_q \Delta^{q \hat H}, $$
 similar to that of a fractional Brownian motion with Hurst index $H=0.19$. Consequently, at daily timescales, the simulated  volatility process of the \textit{lifted Heston model}  not only tricks  the human eye, but also misleads the  estimator of the Hurst index used in \cite{gatheral2014volatility} with an estimated $\hat H=0.19$, way below $0.5$. What about intra-day timescales? We provide an answer in the following subsection.

 \subsubsection{Second statistical experiment for intra-day timescales as in \cite{BLP:16}}
 In  \cite{BLP:16}, an efficient estimator for $H$ based on the autocorrelation function is applied for intra-day timescales, ranging from couple minutes to a day. The estimated $H$ is determined by the following linear regression 
 \begin{align}\label{eq:reg}
 \log(	1 - \rho_{\sigma}(k\Delta))  = b + 2H \log (k\Delta), \quad k=1,\ldots,K,
 \end{align} 
 where $\rho_{\sigma}$ is the autocorrelation function of the time series of the volatility $\sigma$. We refer to \cite[section 2.3.1]{BLP:16} for more details on the estimation procedure. \vs2
 
 We apply  the same methodology that was used in  \cite{BLP:16} to the lifted volatility process $\sigma=\sqrt{V^n}$ with $H=0.1$ and $n=20$. We set $T=2$ years and we simulate one sample path of the lifted volatility process $\sigma$   for $N=3 \times 10^{5}$ times steps. To fix ideas, in a high frequency trading environment, this corresponds roughly to one time step per  minute, under  the convention of $250$ trading days per year and $10$ hours per day.  Then,  from this simulated sample path, we extract subsamples of length $500$ with different time steps $\Delta \in \{1 \mbox{min}, 5 \mbox{min}, 10 \mbox{min}, 30 \mbox{min}, 1 \mbox{h}, 2 \mbox{h}, \ldots, 1 \mbox{day}  \}$, and we estimate  $H$ on each  subsample using the regression \eqref{eq:reg}. The estimated values,  illustrated on  Figure \ref{fig:estimatedH} below,  are aligned with the previous estimation $\hat H = 0.19$, for any timescale greater than $10$ minutes. The estimator converges towards the true value $0.5$ only for very short timescales which are less than 10 minutes.  
 One can compare Figure \ref{fig:estimatedH} with \cite[Figure 3]{BLP:16}.

 \begin{center}	
 	\includegraphics[scale=0.55]{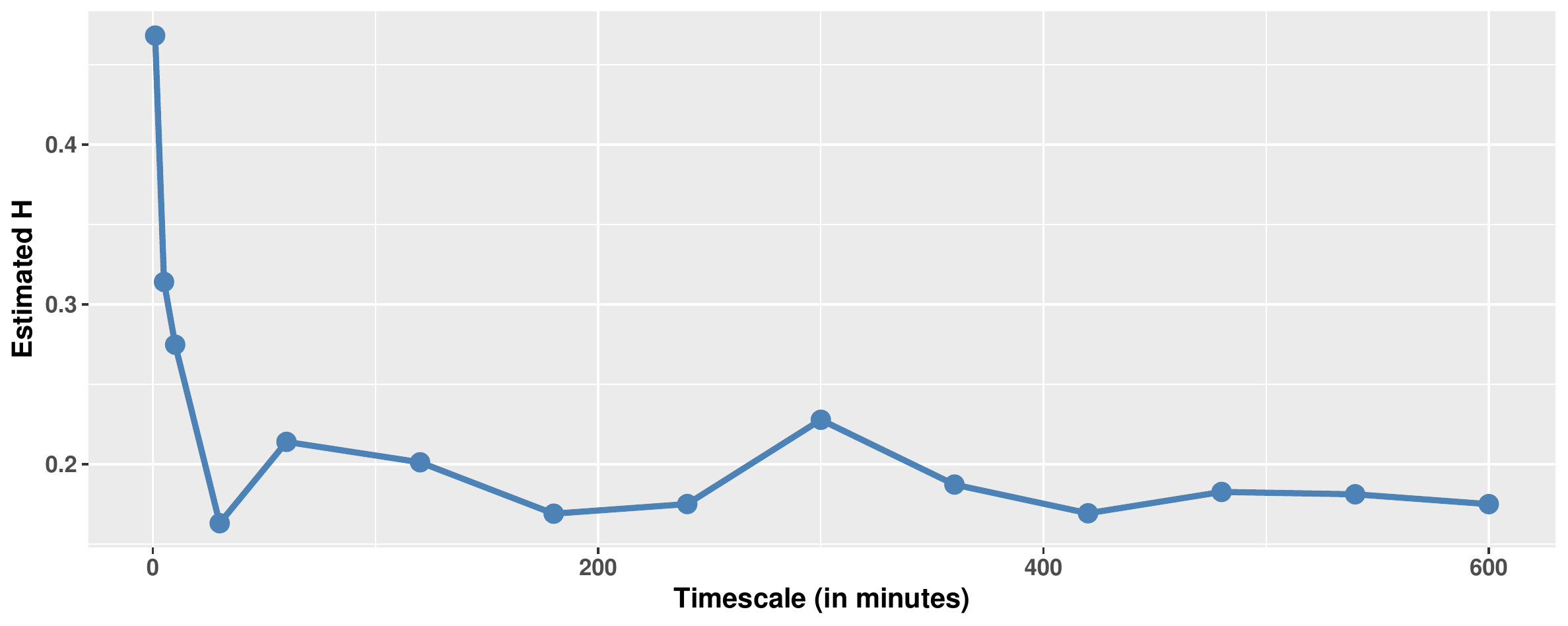}
 	\rule{35em}{0.5pt}
 	\vspace{-0.3cm}
 	\captionof{figure}{{Estimated Hurst index of the simulated path of the lifted volatility process  with $H=0.1$ for different timescales ranging from 1 minute up to 1 day with $n=20$ factors.}} 
 	\label{fig:estimatedH}
 \end{center}

 We point out that the estimator recognizes a semimartingale model for the simulated volatility of the Heston model, with an estimated $H$
close to $0.5$ and displays a value of $\hat H=0.11$ for the S\&P.
The {lifted model} is therefore capable of mimicking,  up to some extent, the `roughness' of the volatility observed on the market, even for short intra-day timescales. This should be paralleled with the explosive-like behavior of the at-the-money skew encountered earlier on Figures \ref{fig:skew}-\ref{fig:skew calibrated}. {Stated otherwise, if one is only provided the lower graph of Figure \ref{fig:sim comparison}, one cannot conclude whether the path has been generated by a rough volatility model with Hurst index $H=0.19$ or by our lifted model with $H=0.1$, for any reasonable timescale.  {As the timescale goes to $0$, the estimated value for $H$ of the lifted model has to converge to $0.5$, since  $V^n$ is a semimartingale, and therefore has the same regularity as a standard Brownian  motion. However, depending on the number of factors, finer timescales are needed for  the estimator  to recognize a semimartingale model with an estimated $\hat H$ close to $0.5$. This is illustrated on Figure \ref{fig:estimatedH50} below, where the same experiment is carried for $n=50$ factors and $r_n=1.8$. }}

  	\vspace{-0.1cm}
 
 \begin{center}	
 	\includegraphics[scale=0.55]{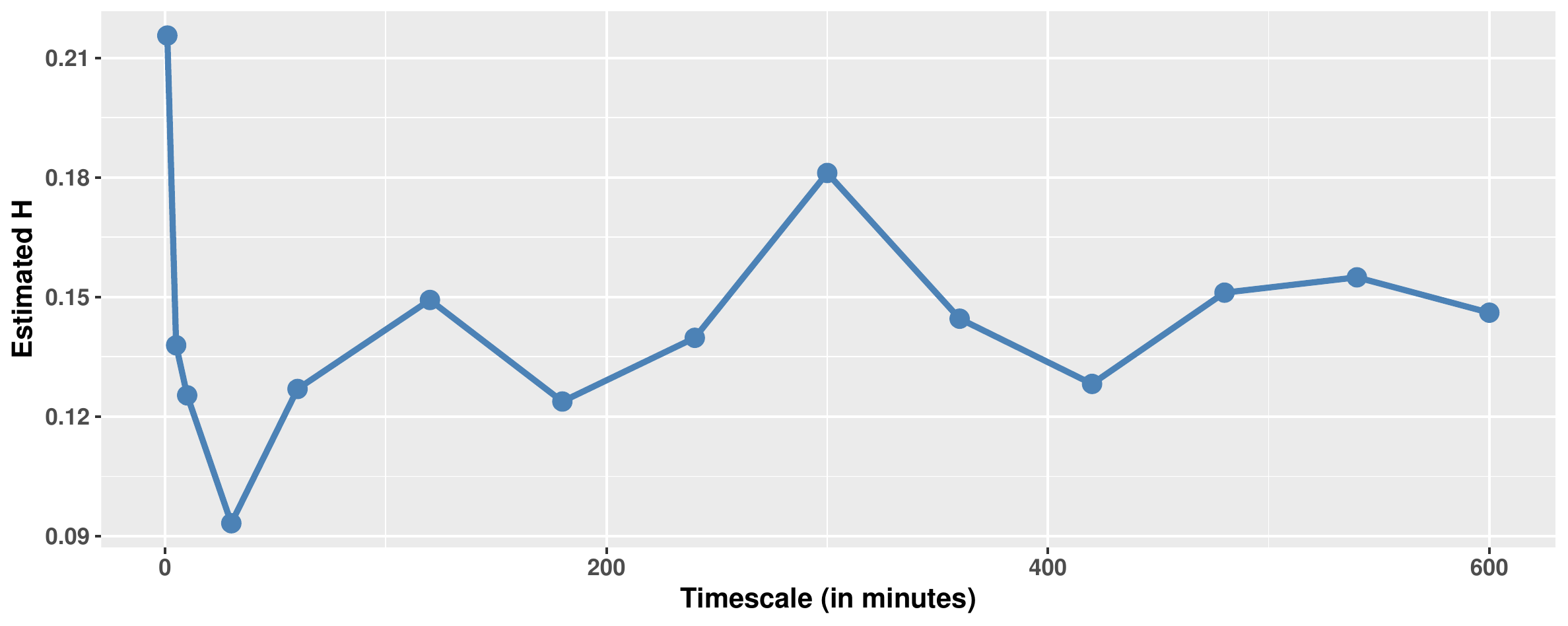}
 	\rule{35em}{0.5pt}
 	\vspace{-0.3cm}
 	\captionof{figure}{{Estimated Hurst index of the simulated path of the lifted volatility process with $H=0.1$ for different timescales ranging from 1 minute up to 1 day with $n=50$ factors.}} 
 	\label{fig:estimatedH50}
 \end{center}

\section{Conclusion}

We introduced  \textit{the lifted Heston model}, a conventional multi-factor stochastic volatility model, where the factors share the same one-dimensional Brownian motion but mean revert at different speeds corresponding to different timescales. The model nests as extreme cases the standard   {Heston model} (for $n=1$ factor), and the rough Heston model (when $n$ goes to infinity). Inspired by rough volatility models, we provided a good parametrization of the  model reducing the number of parameters to calibrate: the model  has only one additional effective parameter  than the standard Heston model, independently of the number of factors. The first five parameters have the same interpretation as in the standard Heston model, whereas the additional one  has a physical interpretation as it is linked to the regularity of the sample paths and the explosion of the at-the-money skew.\vs2

This sheds some new light on the reason behind the remarkable fits of rough volatility models. Indeed, a rough variance process can be seen as a superposition of infinitely many  factors sharing the same one-dimensional Brownian motion but mean reverting at different speeds ranging from $0^+$ to $\infty$. Each factor corresponds to a certain timescale.  Therefore, time multi-scaling is naturally encoded in rough volatility models, which explains why these models are able to jointly handle different maturities in a satisfactory fashion.\footnote{Multiple timescales in the volatility process have been identified  in the literature, see for instance \cite[Section 3.4]{fouque2011multiscale}.} \vs2

{Finally, Table \ref{tablecomparisonmodels conclusion} below compares the characteristics of the three different models. As it can be seen, the \textit{lifted Heston model} possesses an appealing trade-off between flexibility and tractability!}

	\begin{table}[h!]
	\centering  
	\begin{tabular}{c c  c c} 
		\hline\hline                        
		&  & Stochastic volatility models  &    \\
		\hline
		
		Characteristics & Heston & Rough Heston & Lifted Heston \\
		\hline \hline                 
		Markovian & \textcolor{blue0}{\cmark} & \textcolor{red0}{\xmark}  & \textcolor{blue0}{\cmark} \\
		Semimartingale & \textcolor{blue0}{\cmark} & \textcolor{red0}{\xmark}  & \textcolor{blue0}{\cmark} \\
		Simulation & \textcolor{blue0}{Fast} & \textcolor{red0}{Slow}  & \textcolor{blue0}{Fast} \\
		&&&
		\\
		{Affine Volterra process} & \textcolor{blue0}{\cmark} & \textcolor{blue0}{\cmark} & \textcolor{blue0}{\cmark} \\
		Characteristic function & \textcolor{blue0}{Closed} & \textcolor{red0}{Fractional Riccati}  & n-Riccati \\
		Calibration  & \textcolor{blue0}{Fast} & \textcolor{red0}{Slower} & \textcolor{blue0}{20x rough}\textsuperscript{(a)} \\
		&&&\\
		Fit short maturities & \textcolor{red0}{\xmark} & \textcolor{blue0}{\cmark} &\textcolor{blue0}{\cmark} \\
		Estimated daily regularity & \textcolor{red0}{$H \approx 0.5$} & \textcolor{blue0}{$H \approx 0.1$} & $H \approx 0.2$ \\
		\hline 
	\end{tabular}
	\caption{Summary of the characteristics of the different models. {\footnotesize\textsuperscript{(a)}for $n=20$. }}
	\label{tablecomparisonmodels conclusion} 
\end{table}



\begin{appendices}

\section{Existence, uniqueness and rough limiting model}\label{eq:appendix existence}
{
	In the sequel, the symbol $*$ stands for the convolution operation, that is $(f*\mu)(t)= \int_0^t f(t-s) \mu(ds)$ for any suitable function $f$ and measure $\mu$. For a right-continuous  function $f$  of locally bounded variation, we denote by $df$ the measure induced by its distributional derivative, that is 
	$f(t) = f(0) + \int_{(0,t]} df(s)$.}
	  
	  \subsection{Existence and uniqueness}
	  {
	We provide in this section the strong existence and uniqueness of \eqref{eq:liftedS}-\eqref{eq:liftedU}, for a fixed $n \in \mathbb N$. We start by noticing that  \eqref{eq:liftedS} is equivalent to 
$$ S^n_t = \mathcal E\left(\int_0^t V^n_s dB_s\right), \quad t \geq 0,$$
where $\mathcal E$ is the Doléans-Dade exponential. Therefore, it suffices to prove the existence and uniqueness of \eqref{eq:liftedV}-\eqref{eq:liftedU}. Formally, starting from a solution to \eqref{eq:liftedV}-\eqref{eq:liftedU}, the variation of constants formula on \eqref{eq:liftedU} yields
\begin{align}\label{eq: volterra Un}
U^{n,i}_t = \int_0^t e^{-x^n_i (t-s)} \left(-\lambda V^n_s ds+ \nu \sqrt{V^n_s} dW_s \right),  \quad i=1,\ldots,n, 
\end{align}
so that \eqref{eq:liftedV} reads
\begin{align}\label{eq:volterra Vn}
V^{n}_t = g^n_0(t) + \int_0^t K^n(t-s) \left(-\lambda V^n_s ds+ \nu \sqrt{V^n_s} dW_s \right),   
\end{align}
where $K^n$ is the following completely monotone\footnote{{A function $f$ is said to be completely monotone, if it is infinitely differentiable on $(0,\infty)$ such that $(-1)^p f^{(p)} \geq 0$, for all $p \in \mathbb N$.}} kernel
\begin{align}\label{eq:kernel Kn}
K^n(t) = \sum_{i=1}^n c^n_i e^{- x^n_i t }, \quad  t\geq 0.
\end{align} 
Whence, if one proves the uniqueness of \eqref{eq:volterra Vn}, then, uniqueness of \eqref{eq:liftedU} follows by virtue of \eqref{eq: volterra Un}. Conversely, if one proves the existence of a nonnegative solution $V^n$ to \eqref{eq:volterra Vn}, then, one can define  $(U^{n,i})_{1 \leq i \leq n}$ as in \eqref{eq: volterra Un}, showing that $(V^n,(U^{n,i})_{1 \leq i \leq n})$ is a solution to \eqref{eq:liftedV}-\eqref{eq:liftedU}. Therefore, the problem is reduced to proving the existence and uniqueness for the stochastic Volterra equation \eqref{eq:volterra Vn}.} \vs2

{In \cite{AJEE18b}, the existence of a nonnegative solution  to \eqref{eq:volterra Vn} is proved, provided the initial input curve $g^n_0$ satisfies a  certain `monotonicity' condition. This condition is related to the resolvent of the first kind $L^n$ of the  kernel \eqref{eq:kernel Kn}, which is defined as the unique measure satisfying  
	\begin{equation*}
	\int_0^t K^n(t-s)L^n(ds)=1, \quad t \geq 0.\footnote{{The existence of $L^n$ is ensured by the complete monoticity of $K^n$, see \cite[Theorem 5.5.4]{GLS:90}.}}
	\end{equation*}
	 More precisely, denoting by $\Delta_h$ the semigroup of right shifts acting on continuous functions, i.e.~$\Delta_h f=f(h+ \cdot)$ for $h\geq 0$, $g_0^n$ should satisfy
	\begin{equation} \label{Croissance}
	\Delta_h g^n_0 - (\Delta_hK^n * L^n)(0) g^n_0 - d(\Delta_hK^n * L^n) * g^n_0 \geq 0, \quad  h \geq 0, \footnote{
		{One can show that   $\Delta_hK^n*L^n$  is right-continuous and of locally bounded variation, thus the associated measure $d(\Delta_hK^n*L^n)$ is well defined.}}
	\end{equation}
	leading to the following definition of the set $\Gc_{K^n}$ of {admissible input curves}:
	\begin{align*}
	{\Gc_{K^n} = \left\{ g_0^n \mbox{ H\"older continuous of any order  less than 1/2,}  \mbox{ satisfying } \eqref{Croissance} \mbox{ and } g^n_0(0) \geq 0  \right\}.}
	\end{align*} 
It is shown in \cite[Example 2.2]{AJEE18b} that the two specifications of input curves  \eqref{eq:g0 example 1}-\eqref{eq:g0 example 2} provided earlier satisfy \eqref{Croissance}.}\vs2

{We now provide the rigorous existence and uniqueness result for any initial input curve $g_0^n \in \Gc_{K^n}$. {We note that, for the specific choice \eqref{eq:g0 example 2}, the result is an immediate consequence of \cite[Theorem 7.1]{ALP17}.}}

\begin{theorem}[Existence and uniqueness]
		Fix $n \in \mathbb N$, $S_0^n >0$ and assume that $g_0^n \in \Gc_{K^n}$. Then, the stochastic differential equation \eqref{eq:liftedS}-\eqref{eq:liftedU} has a  unique  continuous strong solution $( S^n,V^n,(U^{n,i})_{1 \leq i \leq n})$ such that $V^n_t \geq 0$, for all $t\geq 0$, almost surely. Further, the process $S^n$ is a martingale.
\end{theorem}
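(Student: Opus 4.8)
The plan is to reduce the coupled system \eqref{eq:liftedS}--\eqref{eq:liftedU} to the scalar stochastic Volterra equation \eqref{eq:volterra Vn} for the variance process and to handle the latter with the machinery for affine Volterra processes. Since any continuous nonnegative candidate $V^n$ satisfies $\int_0^t V^n_s\,ds<\infty$ almost surely, the price equation \eqref{eq:liftedS} is uniquely solved by the stochastic exponential $S^n_t=\mathcal E\big(\int_0^t\sqrt{V^n_s}\,dB_s\big)$, so it suffices to prove strong existence and uniqueness of a continuous nonnegative $(V^n,(U^{n,i})_{1\le i\le n})$. As recalled before the statement, variation of constants turns \eqref{eq:liftedU} into \eqref{eq: volterra Un}, and summing against the weights $(c^n_i)$ with the kernel $K^n$ of \eqref{eq:kernel Kn} gives \eqref{eq:volterra Vn}; conversely, from a continuous nonnegative solution $V^n$ of \eqref{eq:volterra Vn} one \emph{defines} $U^{n,i}$ by \eqref{eq: volterra Un} and checks, using $K^n(t-\cdot)=\sum_i c^n_i e^{-x^n_i(t-\cdot)}$, that \eqref{eq:liftedV}--\eqref{eq:liftedU} hold. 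Thus the theorem is equivalent to well-posedness of \eqref{eq:volterra Vn}.

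For \emph{existence}, note that $K^n$, a finite sum of decaying exponentials, is smooth and bounded on $[0,\infty)$, completely monotone, and in particular nonincreasing and in $L^2_{\mathrm{loc}}$, so it belongs to the admissible class of \cite{AJEE18b}; since $g^n_0\in\Gc_{K^n}$ by assumption, \cite{AJEE18b} yields a continuous nonnegative solution $V^n$ of \eqref{eq:volterra Vn}. For \emph{pathwise uniqueness} I would exploit that all factors are driven by the \emph{same} $W$: subtracting the equations \eqref{eq:liftedU} for indices $i$ and $j$ makes the Brownian and the $-\lambda V^n$ terms cancel, so each difference $U^{n,i}-U^{n,j}$ solves a linear ODE and is a deterministic, Gr\"onwall-stable functional of any single factor; conceptually this shows that \eqref{eq:volterra Vn} is no harder to analyze than a one-dimensional equation with affine drift and square-root diffusion. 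I would then carry out the estimate directly on \eqref{eq:volterra Vn}, adapting the Yamada--Watanabe-type argument for convolution stochastic Volterra equations used for the Volterra square-root process in \cite{ALP17}: approximating $|x|$ by smooth functions, using the $1/2$-H\"older modulus of $u\mapsto\sqrt u$, the local boundedness of $K^n$ and the H\"older continuity of $g^n_0$, and applying the BDG and Gr\"onwall inequalities, one obtains that two solutions with the same data agree. Importantly this uses only $g^n_0$ continuous and $K^n$ completely monotone, not the monotonicity condition \eqref{Croissance}, hence it applies for every $g^n_0\in\Gc_{K^n}$. Weak existence plus pathwise uniqueness give, by Yamada--Watanabe, a unique continuous $\F$-adapted strong solution $V^n\ge0$, and then a unique strong $(V^n,(U^{n,i})_{1\le i\le n})$ via \eqref{eq: volterra Un}.

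It remains to prove that $S^n$ is a true martingale. As a nonnegative stochastic exponential of a continuous local martingale, $S^n$ is a supermartingale, so it suffices to show $\E[S^n_T]=S^n_0$ for every $T>0$. I would first derive the moment bounds $\sup_{t\le T}\E[(V^n_t)^p]<\infty$ for all $p\ge1$ by applying BDG and Gr\"onwall to \eqref{eq:volterra Vn} (using $g^n_0$ bounded on $[0,T]$ and $K^n\in L^2(0,T)$), and then run the standard localization argument for Heston-type models: stopping $S^n$ at the exit times $\tau_k$ of $V^n$ from $[1/k,k]$ and using the one-directional correlation $B=\rho W+\sqrt{1-\rho^2}W^\perp$ together with the $L^p$-bounds on $V^n$ to show $\E\big[S^n_{\tau_k}\mathbf{1}_{\{\tau_k\le t\}}\big]\to0$, whence $\E[S^n_t]=S^n_0$. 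Equivalently, \eqref{eq:volterra Vn} is precisely a Volterra Heston variance equation with a completely monotone kernel and an admissible input curve, so this martingale property — and the affine transform \eqref{eq:char function log S} — follows from \cite[Theorem 7.1]{ALP17} and its proof, which rely on the input curve only through the above moment estimates.

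The step I expect to be the main obstacle is pathwise uniqueness: the diffusion coefficient $\sqrt{\cdot}$ is not Lipschitz and, the equation being genuinely of convolution type, the state variable depends on the whole past, so the one-dimensional Yamada--Watanabe scheme must be transported to the Volterra setting; the reduction to a single driving Brownian motion (turning factor differences into deterministic ODEs) and the complete monotonicity of $K^n$ are what make this reduce to the scalar estimates already available in \cite{ALP17}.
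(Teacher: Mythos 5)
Your overall architecture is the one the paper uses: solve \eqref{eq:liftedS} by the stochastic exponential, reduce \eqref{eq:liftedV}--\eqref{eq:liftedU} to the scalar Volterra equation \eqref{eq:volterra Vn} with kernel \eqref{eq:kernel Kn} via variation of constants, invoke \cite{AJEE18b} for weak existence of a nonnegative continuous solution (the only place where \eqref{Croissance} is needed), recover the factors through \eqref{eq: volterra Un}, and obtain the martingale property of $S^n$ as in \cite[Theorem 7.1(iii)]{ALP17}. Those parts are correct and match the paper.

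The gap is in your pathwise-uniqueness step, which you rightly flag as the crux but do not actually close. First, \cite{ALP17} contains no Yamada--Watanabe-type pathwise uniqueness for Volterra square-root equations: for general kernels it proves weak existence and uniqueness \emph{in law} through the martingale problem and the Riccati--Volterra equations, precisely because pathwise uniqueness for convolution equations with H\"older-$1/2$ diffusion coefficients is not available in that generality. Second, the direct transport of the scalar Yamada--Watanabe scheme to \eqref{eq:volterra Vn} does not go through: the difference of two solutions is $V_t-\tilde V_t=\int_0^t K^n(t-s)\bigl(-\lambda(V_s-\tilde V_s)\,ds+\nu(\sqrt{V_s}-\sqrt{\tilde V_s})\,dW_s\bigr)$, so the convolution pushes the stochastic integral of $\sqrt{V}-\sqrt{\tilde V}$ into a term that can only be estimated in $L^1$ by $C\bigl(\int_0^t \E|V_s-\tilde V_s|\,ds\bigr)^{1/2}$, and the resulting Gr\"onwall inequality has a square root at the origin, from which one cannot conclude that the difference vanishes. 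Third, the structural reduction you invoke --- ``$U^{n,i}-U^{n,j}$ solves a linear ODE'' --- is false as stated, since $d(U^{n,i}-U^{n,j})=\bigl(-x^n_i(U^{n,i}-U^{n,j})+(x^n_j-x^n_i)U^{n,j}\bigr)\,dt$ still involves $U^{n,j}$; the correct statement is that each factor is a deterministic functional of $V^n$ through \eqref{eq: volterra Un}, which you already use elsewhere. What actually rescues uniqueness, and what the paper invokes, is that $K^n$ is differentiable, so \eqref{eq:volterra Vn} is equivalent to the finite-dimensional Markovian system \eqref{eq:liftedV}--\eqref{eq:liftedU}, for which strong uniqueness is exactly \cite[Proposition B.3]{AJEE18a}; you need either to cite that result or to supply the argument it contains, rather than attribute the missing estimate to \cite{ALP17}.
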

{\begin{proof}
By virtue of the variation of constants formula on the factors, the  {lifted Heston model} is equivalent to  a Volterra Heston model in the sense of \cite{AJEE18b} of the form
	\begin{align}
	dS_t^n &=  S_t^n \sqrt{ V_t^n} dB_t , \quad  S_0^n> 0,  \label{E:Sn} \\
	V^n_t&= g^n_0(t)+  \int_0^t K^n(t-s) \left( -\lambda V^n_sds + \nu \sqrt{ V^n_s}dW_s \right), \label{E:Vn} 
	\end{align}
	with $K^n$ given by \eqref{eq:kernel Kn}.	 Since $K^n$ is locally Lipschitz and completely monotone, the assumptions of  \cite[Theorem 2.1]{AJEE18b} are met. Consequently, the stochastic Volterra equation \eqref{E:Sn}-\eqref{E:Vn} has a unique  $ \R_+^2$-valued weak continuous solution $(S^n,V^n)$ on some filtered probability space $(\Omega^n, \Fc^n, (\Fc^n_t)_{t \geq 0},\Q^n)$ for any initial condition $S_0^n >0$ and  {admissible input curve} $g^n_0 \in \Gc_{K^n}$. Moreover, since $K^n$ is differentiable, strong uniqueness  is ensured by \cite[Proposition B.3]{AJEE18a}. The claimed  existence and uniqueness statement now follows from \eqref{eq: volterra Un}. Finally, the martingality of $S^n$ follows along the lines of \cite[Theorem 7.1(iii)]{ALP17}.
\end{proof}}

{\subsection{The rough limiting model}}\label{A:limiting model}
We now discuss the convergence of the \textit{lifted Heston model} towards the rough Heston model \eqref{eq:rough price}-\eqref{eq:rough variance}, as the number of factors goes to infinity, we refer to \cite{AJEE18a} for more details. We fix $H \in (0,1/2)$ and we denote by  $K_H:t \to t^{H- \frac 12}/\Gamma(H+ 1/2)$ the fractional kernel of the rough Heston model appearing in \eqref{eq:rough variance}. The kernel $K_H$  can be re-expressed as a Laplace function  
		$$ K_H(t)  = \int_0^{\infty} e^{-xt}\mu(dx), \quad t \geq 0,$$
		with  $\mu(dx)=\frac{x^{-\alpha}}{\Gamma(\alpha)\Gamma(1-\alpha)}$ and $\alpha=H+1/2$. On the one hand, for a fixed $n$, the  parametrization \eqref{eq: ci and xi}	is linked to $\mu$ as follows:
	\begin{align}\label{eq:cxmu}
 c_i^n= \int_{\eta^n_{i-1}}^{\eta^n_{i}} \mu(dx), \quad x^n_i = \frac{1}{c_i^n}  \int_{\eta^n_{i-1}}^{\eta^n_{i}} \mu(dx), \quad i=1,\ldots,n, 
		\end{align}
		where $\eta^n_i=r_n^{i-n/2}$, for $i=0,\dots,n$.  We will show that, under \eqref{eq:rn cond},   
		\begin{align}\label{eq:conv L2}
		 K^n \to K_H, \quad \mbox{as $n$ goes to infinity,} \quad \mbox{in the } L^2 \mbox{ sense}.
		\end{align}
		On the other hand,  for each $n \in \mathbb N$, we have proved the existence of a solution to \eqref{eq:volterra Vn}. One would therefore expect from \eqref{eq:conv L2} the convergence of the sequence of solutions of \eqref{E:Vn} towards the solution of \eqref{eq:rough variance}.  This is indeed the case, as illustrated by the following theorem, which adapts  \cite[Theorem 3.5]{AJEE18a} to the geometric partition.

		\begin{theorem}[Convergence towards the rough Heston model]\label{T:convergence rHeston} 
				Consider a sequence $(r_n)_{n \geq 1}$ satisfying  $\eqref{eq:rn cond}$, and set $g_0^n$ as in \eqref{eq: flat curve} and  $(c^n_i,x^n_i)_{1 \leq i \leq n}$  as in \eqref{eq: ci and xi}, for every even $n=2p$, with $p \geq 1$. Assume $S_0^n=S_0$, for all $n$,  then, the sequence of solutions $(S^{n},V^{n})_{n=2p,p\geq 1}$  to \eqref{eq:liftedS}-\eqref{eq:liftedV} converges weakly, on the space of continuous functions on $[0,T]$ endowed with the uniform topology, towards the rough Heston model \eqref{eq:rough price}-\eqref{eq:rough variance}, for any $T>0$.
			\end{theorem}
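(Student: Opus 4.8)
The plan is to invoke the stability theorem for Volterra Heston models, \cite[Theorem 3.5]{AJEE18a}, for which the only nontrivial input will be the $L^2$-convergence of the kernels. Recall from the existence proof that, for each even $n=2p$, $(S^n,V^n)$ is the unique solution of the Volterra Heston system \eqref{E:Sn}--\eqref{E:Vn} with completely monotone kernel $K^n(t)=\sum_{i=1}^n c_i^n e^{-x_i^n t}$ and input curve $g_0^n$, while the rough Heston model \eqref{eq:rough price}--\eqref{eq:rough variance} is precisely the Volterra Heston system with fractional kernel $K_H(t)=t^{H-1/2}/\Gamma(H+1/2)=\int_0^\infty e^{-xt}\mu(dx)$, $\mu(dx)=c_\alpha x^{-\alpha}dx$, $\alpha=H+1/2\in(1/2,1)$, and curve $g_0$. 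By \cite[Theorem 3.5]{AJEE18a} (which rests on the $n$-uniform moment bounds, tightness and pathwise uniqueness of \cite{ALP17,AJEE18b}), weak convergence of $(S^n,V^n)$ on $C([0,T];\R^2)$ to the rough Heston model holds as soon as: (i) $\|K^n-K_H\|_{L^2(0,T)}\to0$; (ii) $\sup_n\|K^n\|_{L^2(0,T)}<\infty$ and $\sup_n\|K^n(h+\cdot)-K^n\|_{L^2(0,T)}\to0$ as $h\downarrow0$; and (iii) $g_0^n\to g_0$ uniformly on $[0,T]$ together with the corresponding uniform bounds. Given (i)--(iii), tightness follows from $n$-uniform moment estimates on $V^n$, hence on $S^n=\mathcal E(\int_0^\cdot\sqrt{V^n_s}\,dB_s)$, obtained from (ii), and any subsequential limit is identified with the rough Heston model either directly through the stochastic-Volterra martingale problem, or by passing to the limit in the affine transform formula \eqref{eq:char function log S}: aggregating \eqref{eq:psi xi} gives the Volterra--Riccati equation $\psi^n(t)=\int_0^t K^n(t-s)F(u,\psi^n(s))\,ds$ for $\psi^n:=\sum_i c_i^n\psi^{n,i}$, which converges by (i) to the solution of \eqref{eq:RiccatiFrac}, while $\phi^n\to\phi$ by (iii); uniqueness of the rough Heston solution \cite{AJEE18b} then upgrades convergence of the finite-dimensional laws to full weak convergence.

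The analytic core is (i). Using \eqref{eq:cxmu}, write $K^n(t)=\int_{\eta_0^n}^{\eta_n^n}e^{-\bar x_n(x)t}\mu(dx)$ with $\eta_i^n=r_n^{i-n/2}$ and $\bar x_n(x)$ the $\mu$-barycentre of the geometric cell $[\eta_{i-1}^n,\eta_i^n)$ containing $x$, so that $\bar x_n(x)\in[x/r_n,\,r_n x)$ and $|\bar x_n(x)-x|\le(r_n-1)x$; compare this with $K_H(t)=\int_0^\infty e^{-xt}\mu(dx)$. Decompose $K_H-K^n$ into the two missing pieces $\int_0^{\eta_0^n}e^{-xt}\mu(dx)$, $\int_{\eta_n^n}^\infty e^{-xt}\mu(dx)$, and the quadrature error $\int_{\eta_0^n}^{\eta_n^n}\big(e^{-xt}-e^{-\bar x_n(x)t}\big)\mu(dx)$ on the remaining range. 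For the quadrature term, $|e^{-xt}-e^{-\bar x_n(x)t}|\le(r_n-1)\,xt\,e^{-xt/r_n}$; integrating against $\mu$ and in $t\in(0,T)$ yields an $L^2$-bound of order $(r_n-1)T^{H}$, the relevant integral $\int_0^T t^{2H-1}dt$ being finite precisely because $H>0$. The left piece is bounded pointwise by $\mu((0,\eta_0^n))\lesssim(\eta_0^n)^{1/2-H}$, whence an $L^2(0,T)$-bound of order $T^{1/2}(\eta_0^n)^{1/2-H}$. For the right piece, Minkowski's integral inequality gives $\big\|\int_{\eta_n^n}^\infty e^{-x\,\cdot}\mu(dx)\big\|_{L^2(0,T)}\le\int_{\eta_n^n}^\infty(2x)^{-1/2}\mu(dx)\lesssim(\eta_n^n)^{-H}$, the last integral converging only because $\alpha>1/2$, i.e. $H>0$. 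Now $\eta_0^n=e^{-(n/2)\ln r_n}\downarrow0$ and $\eta_n^n=e^{(n/2)\ln r_n}\uparrow\infty$ since $n\ln r_n\to\infty$, while $r_n\downarrow1$ kills the quadrature term; this proves (i). Estimate (ii) falls out of the same computation: $\sup_n\|K^n\|_{L^2(0,T)}<\infty$ by (i) and $\|K_H\|_{L^2(0,T)}<\infty$, and the uniform $L^2$-modulus of continuity follows by a parallel bound $\|K^n(h+\cdot)-K^n\|_{L^2(0,T)}\le\int_{\eta_0^n}^{\eta_n^n}(1-e^{-\bar x_n(x)h})(2\bar x_n(x))^{-1/2}\mu(dx)\lesssim h^{H}$, uniformly in $n$ (again using $H>0$ for convergence of the resulting integrals near $0$ and $\infty$).

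For (iii), the flat curve \eqref{eq: flat curve} is exactly $g_0^n(t)=V_0+\lambda\theta\int_0^t K^n(s)\,ds$ and $g_0(t)=V_0+\lambda\theta\int_0^tK_H(s)\,ds$, so Cauchy--Schwarz and (i) give $\sup_{t\le T}|g_0^n(t)-g_0(t)|\le\lambda\theta\,T^{1/2}\|K^n-K_H\|_{L^2(0,T)}\to0$, and the remaining uniform bounds on $g_0^n$ (boundedness and a uniform Hölder-$H$ modulus) follow from (i)--(ii). The parity restriction $n=2p$ is used only to center the geometric partition at $r_n^0=1$ and plays no other role.

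I expect the main obstacle to be step (i): one must meet two competing demands simultaneously---$r_n\downarrow1$ to annihilate the quadrature error, yet $n\ln r_n\to\infty$ to push the truncation endpoints $\eta_0^n,\eta_n^n$ to $0$ and $\infty$---while carrying the non-integrable tail of $\mu$ and its $x^{-\alpha}$ singularity at the origin through the exponential estimates, and then matching the resulting rates to the precise $n$-uniform hypotheses of \cite[Theorem 3.5]{AJEE18a}, in particular the uniform $L^2$-modulus of continuity of $(K^n)_{n}$. The remainder---tightness, identification of the limit, and the passage from $V^n$ to $S^n$ through the stochastic exponential---is then a routine application of the cited results.
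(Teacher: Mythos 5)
Your proposal is correct and follows essentially the same route as the paper: reduce the theorem to the $L^2(0,T)$ convergence of the kernels $K^n \to K_H$, invoke the stability result \cite[Theorem 3.5]{AJEE18a}, and prove the kernel convergence via the identical three-piece decomposition of $K_H - K^n$ into the left truncation over $(0,\eta_0^n)$, the quadrature error over $[\eta_0^n,\eta_n^n]$, and the right tail over $(\eta_n^n,\infty)$, each handled with the same estimates. The only (immaterial) difference is in the middle term, where the paper uses a second-order Taylor expansion following \cite[Proposition 7.1]{coutin2007approximation} to obtain a rate in $(r_n-1)^2$, whereas you use a first-order barycentre bound giving $(r_n-1)$; both vanish under \eqref{eq:rn cond}.
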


	We will only sketch  the proof for the $L^2$ convergence of the kernels \eqref{eq:conv L2}, in order to highlight the small adjustments  that one needs to make to the proof of  \cite[Theorem 3.5]{AJEE18a}. Indeed, since $\eta_0^n \neq 0$ in our case, \cite[Theorem 3.5]{AJEE18a} cannot be  directly applied, compare with \cite[Assumption 3.1]{AJEE18a} where the left-end point of the partition is zero. 	The following lemma adapts \cite[Proposition 3.3]{AJEE18a} to the geometric partition. The rest of the proof of Theorem \ref{T:convergence rHeston}  follows along the lines of  \cite[Theorem 3.5]{AJEE18a}  by making the same small adjustments highlighted below, mainly to treat the integral chunk between $[0,\eta_0^n]$.  

	\begin{lemma}[Convergence of $K^n$ towards $K_H$]
			Let $(r_n)_{n \geq 1}$ as in  $\eqref{eq:rn cond}$, and  $(c^n_i,x^n_i)_{1 \leq i \leq n}$  given by \eqref{eq: ci and xi}. Define $K^n$ by \eqref{eq:kernel Kn}, then, 
			\begin{align}\label{eq:convL2}
			\|K^n - K_H\|_{L^2(0,T)} \to 0, \quad \mbox{as } n \to \infty,
			\end{align}  
		for all $T>0$.
	\end{lemma}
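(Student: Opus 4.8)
The plan is to read off the Laplace representations of both kernels and split the error into a left tail, a right tail, and a quadrature error on the bulk. By \eqref{eq:cxmu} one has $c^n_i=\mu([\eta^n_{i-1},\eta^n_i])$ and $x^n_i\in[\eta^n_{i-1},\eta^n_i]$, where $\eta^n_i=r_n^{i-n/2}$, so that $K^n(t)=\sum_{i=1}^n c^n_i e^{-x^n_i t}=\int_{\eta^n_0}^{\eta^n_n}e^{-\chi_n(x)t}\,\mu(dx)$, with $\chi_n$ the step function equal to $x^n_i$ on $[\eta^n_{i-1},\eta^n_i)$. Subtracting $K_H(t)=\int_0^\infty e^{-xt}\mu(dx)$ and using the triangle inequality in $L^2(0,T)$ gives
\begin{align*}
\|K^n-K_H\|_{L^2(0,T)}\le{}&\Big\|\int_0^{\eta^n_0}e^{-x\,\cdot}\mu(dx)\Big\|_{L^2(0,T)}+\Big\|\int_{\eta^n_0}^{\eta^n_n}\big(e^{-x\,\cdot}-e^{-\chi_n(x)\,\cdot}\big)\mu(dx)\Big\|_{L^2(0,T)}\\
&+\Big\|\int_{\eta^n_n}^{\infty}e^{-x\,\cdot}\mu(dx)\Big\|_{L^2(0,T)}=:I_1+I_2+I_3.
\end{align*}
The terms $I_1$ and $I_2$ are precisely those absent from \cite[Theorem 3.5]{AJEE18a}, which assumes $\eta^n_0=0$; controlling them is the only genuine adjustment.

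For the tails I would argue as follows. Since $e^{-xt}\le1$, $I_1\le\sqrt T\,\mu([0,\eta^n_0])=\sqrt T\,(\eta^n_0)^{1-\alpha}/(\Gamma(\alpha)\Gamma(2-\alpha))$, which tends to $0$ because $\alpha=H+1/2<1$ and $\eta^n_0=r_n^{-n/2}=e^{-(n/2)\ln r_n}\to0$ under \eqref{eq:rn cond}. For $I_3$, for $x\ge\eta^n_n$ write $e^{-xt}\le e^{-\eta^n_n t/2}e^{-xt/2}$, so that $\int_{\eta^n_n}^\infty e^{-xt}\mu(dx)\le e^{-\eta^n_n t/2}K_H(t/2)=c_H\,e^{-\eta^n_n t/2}\,t^{H-1/2}$; squaring, integrating over $(0,T)$ and rescaling $s=\eta^n_n t$ gives $I_3^2\le c_H^2\,(\eta^n_n)^{-2H}\int_0^{\infty}e^{-s}s^{2H-1}\,ds$, which is finite because $H>0$ and tends to $0$ because $\eta^n_n=r_n^{n/2}=e^{(n/2)\ln r_n}\to\infty$ under \eqref{eq:rn cond}.

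The heart of the argument is $I_2$, and here the geometric structure of the partition is exactly what permits a single, scale-free estimate. For $x\in[\eta^n_{i-1},\eta^n_i]$ both $x$ and $\chi_n(x)=x^n_i$ lie in that interval, hence $|\chi_n(x)-x|\le\eta^n_i-\eta^n_{i-1}=(r_n-1)\eta^n_{i-1}\le(r_n-1)x$ and $\min(\chi_n(x),x)\ge\eta^n_{i-1}\ge x/r_n$. Combined with the elementary bound $|e^{-at}-e^{-bt}|\le|a-b|\,t\,e^{-\min(a,b)t}$, this yields the pointwise estimate $|e^{-xt}-e^{-\chi_n(x)t}|\le(r_n-1)\,x\,t\,e^{-xt/r_n}$ for every $x\in[\eta^n_0,\eta^n_n]$. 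Integrating against $\mu$ and using the Gamma integral $\int_0^\infty x\,e^{-xt/r_n}\mu(dx)=\frac{1-\alpha}{\Gamma(\alpha)}(r_n/t)^{2-\alpha}$ (with $2-\alpha=\tfrac32-H$) gives $\big|\int_{\eta^n_0}^{\eta^n_n}(e^{-xt}-e^{-\chi_n(x)t})\mu(dx)\big|\le\frac{1-\alpha}{\Gamma(\alpha)}(r_n-1)\,r_n^{2-\alpha}\,t^{H-1/2}$, hence $I_2\le\frac{1-\alpha}{\Gamma(\alpha)}(r_n-1)\,r_n^{2-\alpha}\,\|t^{H-1/2}\|_{L^2(0,T)}$; the norm is finite since $2(H-1/2)>-1$, and the prefactor tends to $0$ because $r_n\downarrow1$. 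Combining the three bounds yields $\|K^n-K_H\|_{L^2(0,T)}\to0$.

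The main obstacle is the treatment of $I_2$ over the full range $[\eta^n_0,\eta^n_n]$: the naive route of splitting into intervals, applying the Minkowski integral inequality and then the exact value $\|e^{-x\,\cdot}-e^{-x^n_i\,\cdot}\|_{L^2(0,\infty)}=|x-x^n_i|/\sqrt{2xx^n_i(x+x^n_i)}$, produces a per-interval bound of order $(r_n-1)\,c^n_i/\sqrt{\eta^n_{i-1}}$ whose sum over $i$ diverges at the small-$x$ end (one picks up $(\eta^n_0)^{-H}=r_n^{Hn/2}\to\infty$) under \eqref{eq:rn cond}. The resolution is to work instead with the uniform \emph{relative} error $|\chi_n(x)-x|\le(r_n-1)x$ together with the companion lower bound $\min(\chi_n(x),x)\ge x/r_n$, which collapse the whole bulk into one Gamma integral and make the decay rate $r_n-1$ explicit. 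One should also record that $\eta^n_0\to0$ and $\eta^n_n\to\infty$ both follow from $n\ln r_n\to\infty$, so that all three pieces vanish under exactly the hypotheses \eqref{eq:rn cond}; the rest of the proof of Theorem \ref{T:convergence rHeston} then follows verbatim as in \cite[Theorem 3.5]{AJEE18a}, modulo this handling of the chunk on $[0,\eta^n_0]$.
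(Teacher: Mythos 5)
Your proof is correct, and it uses the same three-way decomposition as the paper (left tail over $[0,\eta^n_0]$, quadrature error on the bulk, right tail over $[\eta^n_n,\infty)$), with the tails handled in essentially the same way: your bound for $I_1$ is identical, and your exponential-splitting bound for $I_3$ is a cosmetic variant of the paper's estimate $I_3^n\le\int_{\eta^n_n}^\infty(2x)^{-1/2}\mu(dx)$, both producing a negative power of $\eta^n_n=r_n^{n/2}$ that vanishes under the second condition in \eqref{eq:rn cond}. The genuine difference is the bulk term $I_2$. The paper invokes a \emph{second-order} Taylor expansion (following \cite[Proposition 7.1]{coutin2007approximation}), which exploits the fact that, by \eqref{eq:cxmu}, $x^n_i$ is the $\mu$-barycenter of $[\eta^n_{i-1},\eta^n_i]$ so the first-order term $\int_{\eta^n_{i-1}}^{\eta^n_i}(x-x^n_i)\mu(dx)$ cancels exactly; this yields the sharper rate $(r_n-1)^2$, at the cost of importing an external estimate. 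You instead use only the containment $x^n_i\in[\eta^n_{i-1},\eta^n_i]$, the uniform relative-error bounds $|\chi_n(x)-x|\le(r_n-1)x$ and $\min(\chi_n(x),x)\ge x/r_n$ afforded by the geometric partition, and the elementary inequality $|e^{-at}-e^{-bt}|\le|a-b|\,t\,e^{-\min(a,b)t}$, collapsing the whole bulk into a single Gamma integral. This is self-contained and makes transparent why the geometric mesh is the right one, but it only gives the rate $r_n-1$; both rates suffice for the convergence \eqref{eq:convL2}. Your closing observation — that the naive per-interval $L^2$ route diverges at the small-$x$ end — correctly identifies why some form of cancellation or relative-error control is needed, and is precisely the obstruction that both arguments circumvent.
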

	
	\begin{proof}
		 Set $\eta^n_i=r_n^{i-n/2}$, for $i=0,\ldots,n$. Using \eqref{eq:cxmu},  we start by decomposing $(K_H-K^n)$ as follows 
		\begin{align*}
	K_H-K^n&=\int_0^{\infty}e^{-x(\cdot)}  \mu(dx)-\sum_{i=1}^n c^n_i e^{-x^n_i(\cdot)} \\
		&= \int_0^{\eta_0^n}e^{-x(\cdot)}  \mu(dx) + \left(\sum_{i=1}^n \int_{\eta_{i-1}^n}^{\eta_i^n}\left(e^{-x(\cdot)} - e^{-x^n_i(\cdot)}\right)\mu(dx) \right) + \int_{\eta_n^n}^{\infty} e^{-x(\cdot)}  \mu(dx)\\
		&:= J_1^n + J_2^n + J_3^n,
		\end{align*}
		so that 
		\begin{align*}
		\|K_H-K^n\|_{L^2(0,T)} \leq I_1^n + I_2^n + I_3^n,
		\end{align*}
		with $I^n_k=\|J^n_k\|_{L^2(0,T)}$, for $k=1,2,3$. We now prove that each $I^n_{k} \to 0$, as $n$ tends to $\infty$.   	
		Relying on a second order Taylor expansion, along the lines of the proof of \cite[Proposition 7.1]{coutin2007approximation}, we get the following bound 
		$$\left |  \int_{\eta_{i-1}^n}^{\eta^n_i} \left( e^{-x t}-e^{-x_i^n t}\right) \mu(dx) \right | \leq C \, t^2 \,  r_n^{1/2} \, (r_n-1)^2 \, \int_{\eta_{i-1}^n}^{\eta^n_i}  (1 \wedge x^{-1/2}) \mu(dx) , \quad t \leq T,$$
		for all $i=1,\ldots,n$, where $C$ is a constant independent of $n$, $i$ and $t$. Summation over $i=1,\ldots,n$ leads to 
		$$ I^n_{2} \leq C \, \frac{T^{5/2}}{\sqrt{5}}   \,  r_n^{1/2} \, (r_n-1)^2 \, \int_{0}^{\infty}  (1 \wedge x^{-1/2}) \mu(dx),$$
		so that $I^n_2 \to 0 $, as $n \to \infty$, by virtue of the first condition in \eqref{eq:rn cond}. On another note, 
		$$ I_1^n \leq \int_0^{\eta_0^n} \mu(dx)=\frac{(\eta_0^n)^{1-\alpha}}{\Gamma(\alpha)\Gamma(2-\alpha)}= \frac{r_n^{-(1-\alpha)n/2}}{\Gamma(\alpha)\Gamma(2-\alpha)} \to 0, \quad \mbox{when } n \to \infty,$$
		thanks to the second condition in \eqref{eq:rn cond}. Similarly, 
		$$ {I_3^n \leq \int_{\eta_n^n}^{\infty} \sqrt{\frac{1-e^{-2xT}}{2x}} \mu(dx)  \leq  \frac{r_n^{(1/2-\alpha)n/2}}{\Gamma(\alpha)\Gamma(1-\alpha)(\alpha-1/2)} \to 0, \quad \mbox{when } n \to \infty.} $$
 		Combining the above leads to \eqref{eq:convL2}.
		\end{proof}

\section{The full Fourier-Laplace transform}\label{S: appendix full fourier}

We provide the full Fourier-Laplace transform for the joint process $X^n:=(\log S^n,V^n)$ extending \eqref{eq:char function log S}. The formula can be used to price path-dependent options on the stock price $S^n$ and the variance process $V^n$.\vs2

{Once again, this is a particular case of  \cite[Section 4]{AJEE18b}, by observing that $K^n$ defined in \eqref{eq:kernel Kn} is the Laplace transform of the following nonnegative measure 
$$\mu^n(dx)= \sum_{i=1}^n c^n_i \delta_{x^n_i}(dx) .$$}

 Fix row vectors $u=(u_1,u_2) \in \mathbb C^2$ and $f\in L^1_{\rm loc}(\R_+,(\mathbb C^2))$ such that
\begin{align*}
\text{${\rm Re\,} (u_1+1*f_1) \in[0,1]$, \; ${\rm Re\,} u_2 \le0$ \; and \; ${\rm Re\,} f_2\le0$,}
\end{align*}
then,  {it follows from \cite[Remark 4.3]{AJEE18b} with $\mu=\sum_{i=1}^n c_i^n \delta_{x^n_i}$} that the  Fourier-Laplace transform of $X^n=(\log S^n, V^n)$ is exponentially affine with respect to the family $(U^{n,i})_{ 1 \leq i \leq n}$, 
\begin{align*}
\E\left[ \exp\left(u X^n_T + (f*X^n)_T\right) \Mid \Fc_t \right] = \exp\left({\phi^n(t,T)  +  \psi_1(T-t) \log S^n_t  +  \sum_{i=1}^n  c^n_i\psi_2^{n,i} (T-t) U^{n,i}_t }\right),
\end{align*}
for all $t \leq T$, where  
 $(\psi_1,(\psi^{n,i}_2)_{1 \leq i \leq n})$ are  the unique solutions of the  following  system of Riccati ordinary differential equations
\begin{align*}
\psi_1 &= u_1+1*f_1, \\
(\psi^{n,i}_2)' &= - x^n_i  \psi^{n,i}_2 + F\left(\psi_1,   \sum_{j=1}^n c^n_j \psi^{n,j}_2 \right),\quad  \psi^{n,i}_2(0)=u_2, \quad i=1,\ldots, n,
\end{align*}
with 
\begin{align*}
F(\psi_1,\psi_2) &= f_2+\frac12\left( \psi_1^2-\psi_1\right) +(\rho \nu \psi_1- \lambda)  \psi_2 + \frac{\nu^2}{2} \psi^2_2
\end{align*}
and
\begin{align*}
\phi^n(t,T) &=  u_2 g^n_0(T ) + \int_0^{T-t} F\left(\psi_1,\sum_{i=1}^n  c^n_i\psi_2^{n,i}(s)\right)  g^n_0(T-s) ds  + \int_0^t f(T-s)X_s ds , \quad t \leq T.
\end{align*}

\section{Discretization schemes}\label{A:scheme}

\subsection{Riccati equations}
The aim of this section is to design an approximation scheme of the $n$-dimensional Riccati system of equations \eqref{eq:psi xi}. In order to gain some insights, consider first the case where $F \equiv 0$  so that \eqref{eq:psi xi} reduces to 
\begin{align}\label{eq:Riccati F=0}
 (\psi^{n,i})' = - x^n_i \psi^{n,i} , \quad  i=1, \ldots , n,
\end{align}
and the solution is given by 
$$  \psi^{n,i}(t)= \psi^{n,i}(0) e^{-x^n_i t} , \quad  i=1, \ldots , n.$$
  One could start with an explicit Euler scheme for \eqref{eq:Riccati F=0}, that is 
$$ \hat \psi^{n,i}_{t_{k+1}} = \hat \psi^{n,i}_{t_{k}}  - x^n_i \Delta t \hat \psi^{n,i}_{t_k} =  \left(1-x^n_i \Delta t \right ) \hat \psi^{n,i}_{t_{k}},  \quad  i=1, \ldots , n,  $$ 
for a regular time grid $t_k=(kT)/N$ for all $k= 1, \ldots , N$, where $T$ is the terminal time, $N$ the number of time steps and $\Delta t = T/N$.  A sufficient condition for the stability  of the scheme reads 
$$  \Delta t   \leq  \min_{1 \leq i \leq n} \frac{1}{x^{n}_i}.  $$
Recall from \eqref{eq: ci and xi} that $x^n_n$ grows very large as  $n$ increases. For instance, for $n=20$, $r_{20}=2.5$ and $H=0.1$, $x^n_n= 6417.74$. Consequently, if one needs to ensure the stability of the explicit scheme, one needs a very large number of time steps $N$. In contrast, the implicit Euler scheme 
$$  \hat \psi^{n,i}_{t_{k+1}} = \hat \psi^{n,i}_{t_{k}}  - x^n_i \Delta t \, \hat \psi^{n,i}_{t_{k+1}},  \quad  i=1, \ldots , n,   $$
is stable for any number of time steps $N$ and reads
$$ \hat \psi^{n,i}_{t_{k+1}} =  \frac{1}{1+ x^n_i \Delta t  }\hat \psi^{n,i}_{t_{k}},  \quad  i=1, \ldots , n.$$   
For this reason, we consider the following explicit-implicit discretization scheme of the $n$-dimensional Riccati system of equations \eqref{eq:psi xi}
\begin{align}\label{eq:explicit implicit riccati}
\hat \psi^{n,i}_0 &= 0, \quad  
\hat \psi^{n,i}_{t_{k+1}} =   \frac{1}{1+ x^n_i \Delta t }  \left(\hat \psi^{n,i}_{t_{k}}  + \Delta t \, F\left(u,   \sum_{j=1}^n c^n_j \hat \psi^{n,j}_{t_{k}} \right) \right), \quad i=1,\ldots, n,
\end{align}
for a regular time grid $t_k=k \Delta t$ for all $k= 1, \ldots , N$, with time step size $\Delta t = T/N$,  terminal time $T$ and number of time steps $N$. 
Alternatively, one could also consider the exponential scheme for the Riccati equations  by replacing the term $1/(1+x^n_i \Delta t)$ with $e^{-x^n_i \Delta t}$. One can also combine more involved discretization schemes for the explicit part involving the quadratic function $F$, for instance higher order Runge-Kutta methods can be used, see \cite{lambert1991numerical}.

\subsection{Stochastic process}\label{S: sim Vn}
Similarly, we suggest to consider the following modified explicit-implicit scheme for the variance process $V^n$:
\begin{align}
\hat V^n_{t_k} &= g_0^n(t_k) + \sum_{i=1}^n c^n_i \hat U^{n,i}_{t_k}, \quad \;  \hat U^{n,i}_0 =0, \label{eq:simliftedV}\\
\hat U_{t_{k+1}}^{n,i} &= \frac{1}{1+ x^n_i \Delta t }  \left(\hat U_{t_{k}}^{n,i} -\lambda \hat V^n_{t_k}  \Delta t + \nu \sqrt{\left(\hat V^n_{t_k}\right)^+} \left(W_{t_{k+1}}-W_{t_k}\right)\right), \quad i=1,\ldots,n, \label{eq:simliftedU}
\end{align}
for a regular time grid $t_k=k \Delta t$, $k=1 \ldots N$,  $\Delta t = T/N$ and  {$ (W_{t_{k+1}}-W_{t_k}) \sim \mathcal N (0,\Delta t)$.} Notice that we take the positive part $(\cdot)^+$ since the simulated process can become negative. Once there, simulating  the spot-price process $S^n$ is straightforward. We leave the theoretical study of  convergence and stability for future work. {Numerically, the scheme seems stable.} 
 Alternatively,  one could also consider the exponential scheme for the stochastic process by replacing the term $1/(1+x^n_i \Delta t)$ with $e^{-x^n_i \Delta t}$. As a final remark, one notices that \eqref{eq:simliftedV}-\eqref{eq:simliftedU} corresponds to the space-time discretization of the integro-differential stochastic partial differential equation \eqref{eq:spde U1}-\eqref{eq:spde U2}. This is illustrated on Figure \ref{fig:spde} below.

 \newpage
 ~
 \vs2\vs2\vs2\vs2\vs2\vs2\vs2\vs2\vs2\vs2\vs2\vs2\vs2\vs2\vs2\vs2\vs2\vs2\vs2\vs2
\begin{center}
	\includegraphics[scale=0.65]{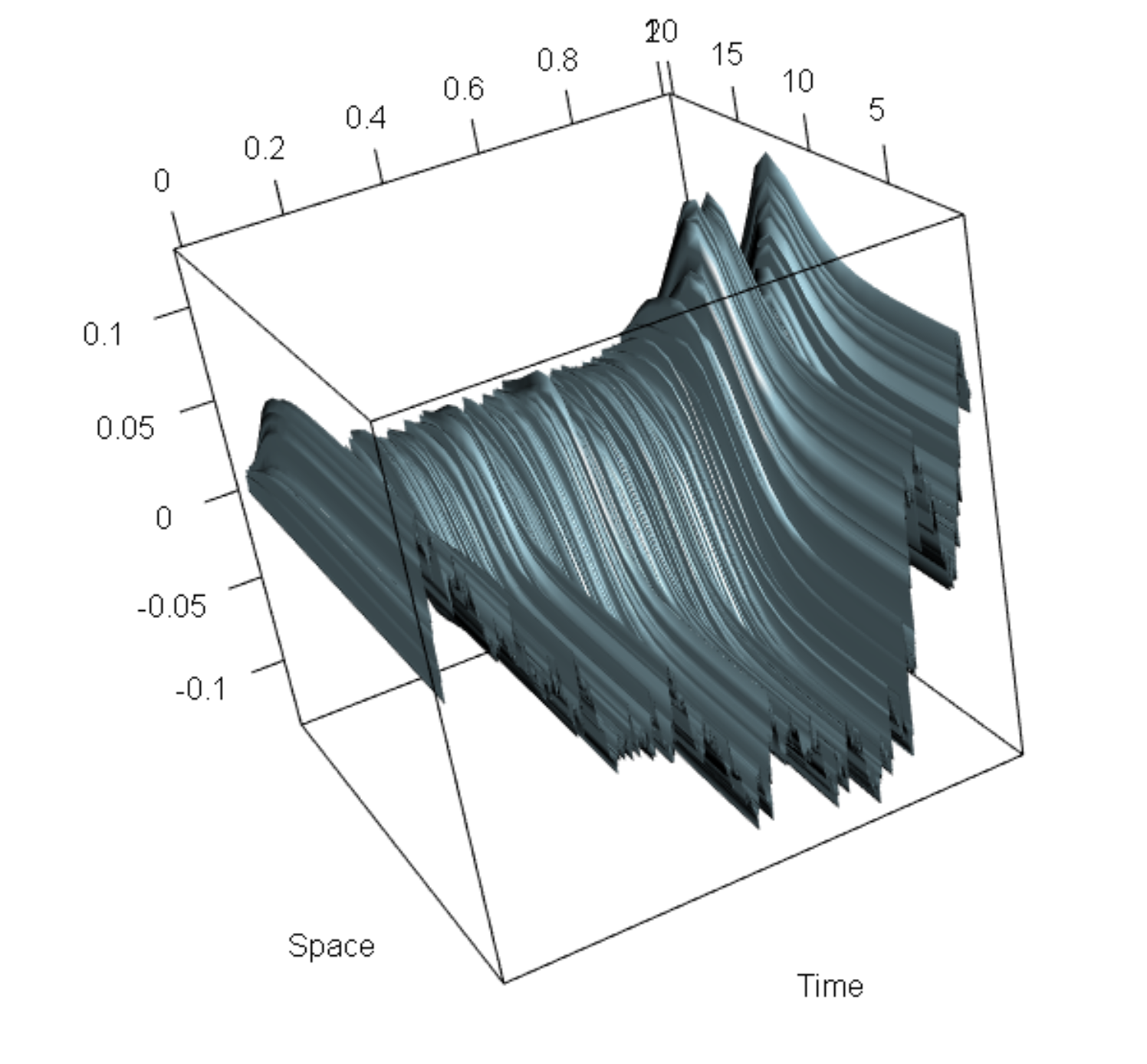}
	\rule{35em}{0.5pt}
	\captionof{figure}{{Simulated path of  the stochastic partial differential equation \eqref{eq:spde U1}-\eqref{eq:spde U2} by using the scheme \eqref{eq:simliftedV}-\eqref{eq:simliftedU}.}}
	\label{fig:spde}
\end{center}


\newpage
~
\vs2\vs2\vs2\vs2\vs2\vs2\vs2\vs2\vs2\vs2\vs2\vs2\vs2\vs2
\begin{center}
	\includegraphics[scale=0.65]{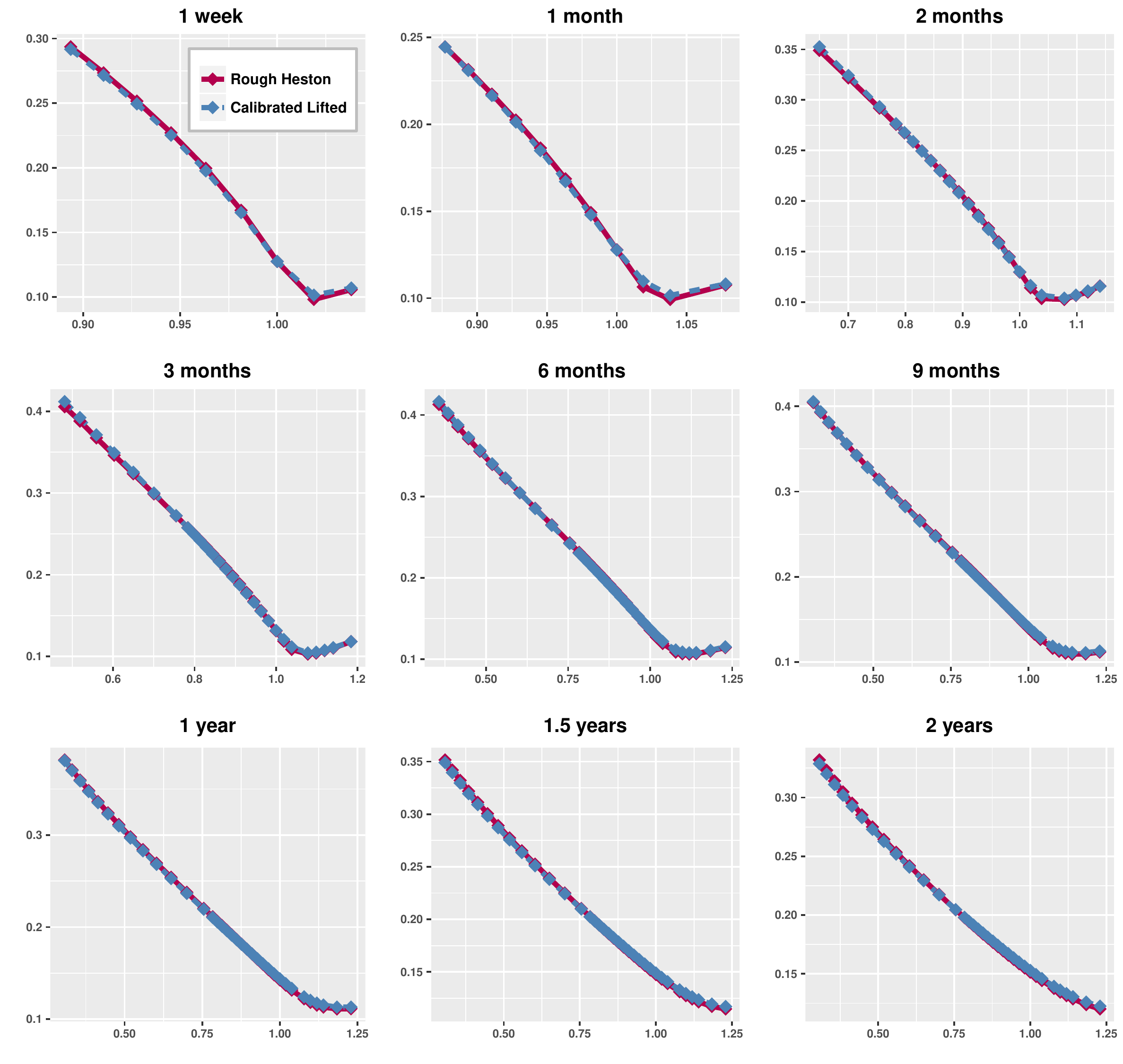}
	\rule{35em}{0.5pt}
	\captionof{figure}{{{Implied volatility surface of the rough Heston model $\sigma_{\infty}(K, T;\Theta_0)$  of \eqref{eq: 5parameters} (red) and the calibrated {lifted Heston model} $\sigma_{20}(K, T; r_{20}=2.5, \hat\Theta_0)$ of Table \ref{table:calibrate lifted}  (blue) for maturities ranging from $1$ week to $2$ years ($\mbox{MSE}=4.01\mbox{e-}07)$.}}}
	\label{fig:lifted vs rough}
\end{center}

\newpage
~
\vs2\vs2\vs2\vs2\vs2\vs2\vs2\vs2\vs2\vs2\vs2\vs2\vs2\vs2
\begin{center}
	\includegraphics[scale=0.65]{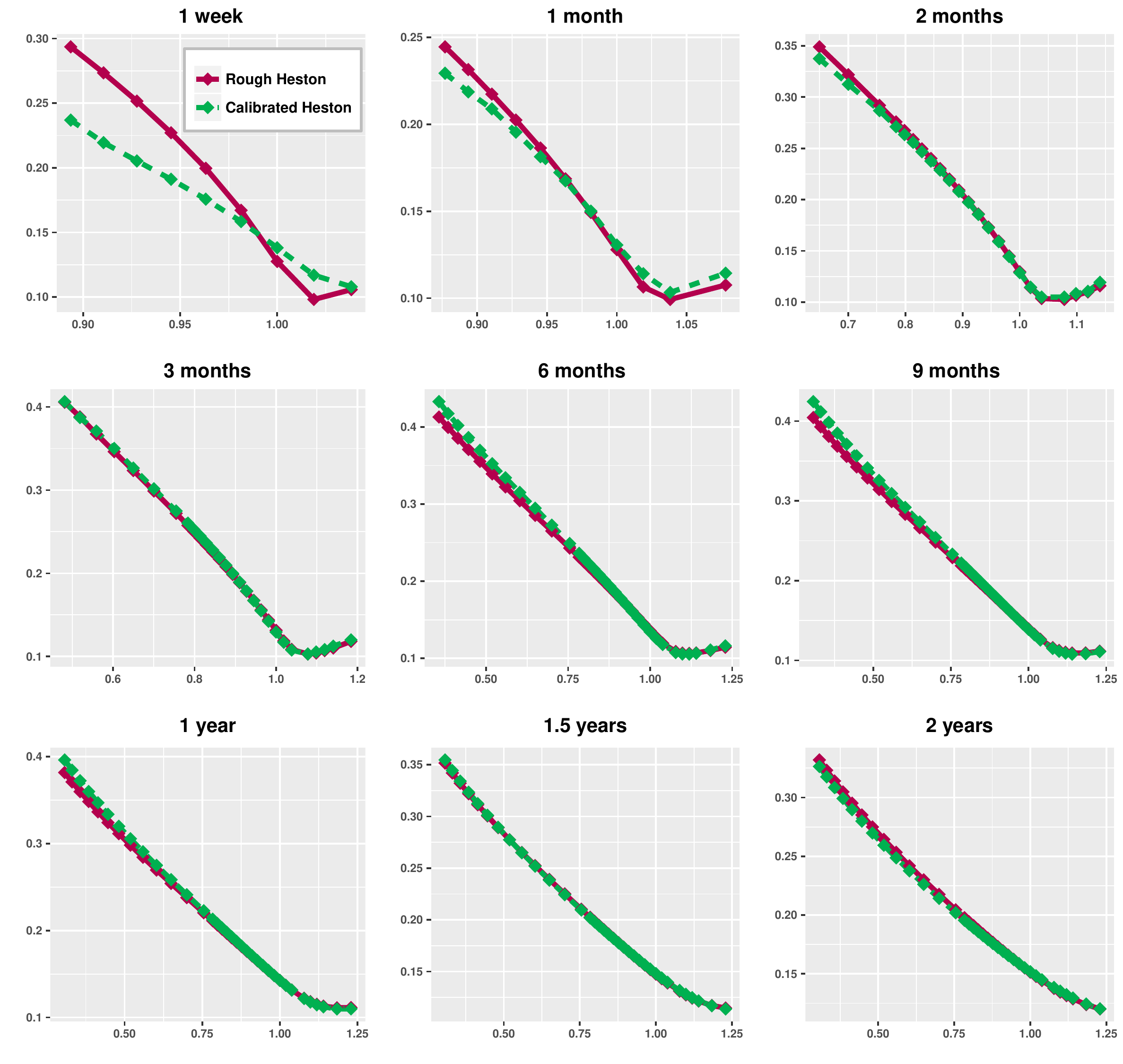}
	\rule{35em}{0.5pt}
	\captionof{figure}{{Implied volatility surface of the rough Heston model $\sigma_{\infty}(K, T;\Theta_0)$ (red) and the calibrated Heston model of Table \ref{table:calibrate Heston} (green) for maturities ranging from $1$ week to $2$ years ($\mbox{MSE}=$ 2.06e{-03}).}}
	\label{fig:calibrated heston surface}
\end{center}

\newpage

~
\vs2\vs2\vs2\vs2\vs2\vs2\vs2\vs2\vs2\vs2\vs2\vs2\vs2\vs2
\vs2\vs2\vs2\vs2\vs2\vs2\vs2\vs2\vs2\vs2\vs2

\begin{table}[h!]
	\centering  
	\begin{tabular}{ccccc} 
		\hline\hline                        
  $\nu$ & $\rho$ & $H$ & & MSE \\  
		\hline
0.22 & -0.67 & 0.09 &  & 3.63e-06 \\ 
0.14 & -0.54 & 0.19 &  & 5.34e-06 \\ 
0.35 & -0.65 & 0.19 &  & 8.17e-06 \\ 
0.14 & -0.83 & 0.06 &  & 9.74e-05 \\ 
0.22 & -0.59 & 0.15 &  & 4.60e-06 \\ 
0.37 & -0.50 & 0.12 &  & 4.55e-06 \\ 
0.40 & -0.53 & 0.11 &  & 4.56e-06 \\ 
0.34 & -0.85 & 0.08 &  & 3.45e-04 \\ 
0.22 & -0.89 & 0.09 &  & 1.25e-04 \\ 
0.44 & -0.76 & 0.11 &  & 2.79e-04 \\ 
0.32 & -0.70 & 0.12 &  & 4.56e-06 \\ 
0.42 & -0.63 & 0.08 &  & 5.22e-06 \\ 
0.10 & -0.61 & 0.17 &  & 3.69e-06 \\ 
0.42 & -0.64 & 0.11 &  & 4.81e-06 \\ 
0.30 & -0.69 & 0.17 &  & 5.96e-06 \\ 
0.06 & -0.71 & 0.17 &  & 2.98e-06 \\ 
0.36 & -0.71 & 0.16 &  & 6.14e-06 \\ 
0.25 & -0.80 & 0.18 &  & 1.63e-04 \\ 
0.09 & -0.77 & 0.06 &  & 2.87e-06 \\ 
0.35 & -0.74 & 0.13 &  & 1.44e-04 \\ 

	\hline
	\end{tabular}
	\caption{{\textbf{Robustness of $r_{20}=2.5$:} First $20$ values of the simulated parameters and the corresponding mean squared error between the implied volatility surface of the lifted model $\sigma_{20}(K,T;2.5,\Theta_k)$ and the rough model $\sigma_{\infty}(K,T;\Theta_k)$, for $k=1,\ldots,20$.}}
	\label{table random20} 
\end{table}

 \end{appendices}

 \newpage
\bibliographystyle{plainnat}
  {\bibliography{biblifting}}
 
\end{document}